\newcommand{\topic}[1]{\smallskip
\noindent {\bf #1.}}
\newcommand{\ttt}[1]{{\small \texttt{#1}}\xspace}
\newcommand{\method}[1]{{\textbf{\ttt{#1}}}}
\newcommand{\methodsmall}[1]{{\textbf{\scriptsize \texttt{#1}}}}
\definecolor{navyblue}{rgb}{0.0, 0.0, 0.5}
\definecolor{rblue}{RGB}{38, 70, 168}      %
\definecolor{rred}{RGB}{172, 41, 37}       %
\definecolor{rgreen}{RGB}{24, 78, 24}   %
\definecolor{rpurple}{RGB}{126, 47, 142}   %
\newcommand{\all}[1]{#1}
\newcommand{\rone}[1]{#1}
\newcommand{\rtwo}[1]{#1}
\newcommand{\rthree}[1]{#1}
\newcommand{\techreport}[1]{#1}
\newcommand{\cameraready}[1]{}
\begin{document}

\title{Task Cascades for Efficient Unstructured Data Processing}

\author{Shreya Shankar}
\affiliation{%
  \institution{UC Berkeley}
  \country{USA}
}
\email{shreyashankar@berkeley.edu}

\author{Sepanta Zeighami}
\affiliation{%
  \institution{UC Berkeley}
  \country{USA}
}
\email{zeighami@berkeley.edu}

\author{Aditya G. Parameswaran}
\affiliation{%
  \institution{UC Berkeley}
  \country{USA}
}
\email{adityagp@berkeley.edu}

\begin{abstract}
Modern database systems allow users to query or process unstructured text or document columns using LLM-powered functions. Users can express an operation in natural language (e.g., {\em ``identify if this review mentions billing issues''}), with the system executing the operation on each document, in a row-by-row fashion. One way to reduce cost on a batch of documents is to employ the model cascade framework: a cheap proxy model processes each document, and only uncertain cases are escalated to a more accurate, expensive oracle. However, model cascades miss important optimization opportunities; for example, often only part of a document is needed to answer a query, or other related, but simpler operations (e.g., {\em ``is the review sentiment negative?'', ``does the review mention money?''}) can be handled by cheap models more effectively than the original operation, while still being correlated with it. 

We introduce the {\bf \em task cascades} framework, which generalizes model cascades by varying not just the model, but also the {\em document portion} and {\em operation} at each stage. Our framework uses an LLM agent to generate simplified, decomposed, or otherwise related operations and selects the most relevant document portions, constructing hundreds of candidate tasks from which it assembles a task cascade. We show that optimal cascade selection is intractable via reduction from {\sc Minimum Sum Set Cover}, but our iterative approach constructs effective cascades.  We also provide an extension that offers statistical accuracy guarantees: the resulting cascade meets a user-defined accuracy target (with respect to the oracle) up to a bounded failure probability. 
Across eight real-world document processing tasks at a 90\% target accuracy, task cascades reduce end-to-end cost by an average of 36\% compared to model cascades at production scale.
\end{abstract}

\begin{CCSXML}
<ccs2012>
   <concept>
       <concept_id>10002951.10002952</concept_id>
       <concept_desc>Information systems~Data management systems</concept_desc>
       <concept_significance>500</concept_significance>
       </concept>
   <concept>
       <concept_id>10010405.10010497</concept_id>
       <concept_desc>Applied computing~Document management and text processing</concept_desc>
       <concept_significance>300</concept_significance>
       </concept>
   <concept>
       <concept_id>10010147.10010178.10010179</concept_id>
       <concept_desc>Computing methodologies~Natural language processing</concept_desc>
       <concept_significance>300</concept_significance>
       </concept>
 </ccs2012>
\end{CCSXML}

\ccsdesc[500]{Information systems~Data management systems}
\ccsdesc[300]{Applied computing~Document management and text processing}
\ccsdesc[300]{Computing methodologies~Natural language processing}

\received{July 2025}
\received[revised]{October 2025}
\received[accepted]{November 2025}

\maketitle

\section{Introduction}
\label{sec:intro}

\begin{figure*}
\techreport{\vspace{-10pt}}
\cameraready{\vspace{-5pt}}
\includegraphics[width=0.95\linewidth]{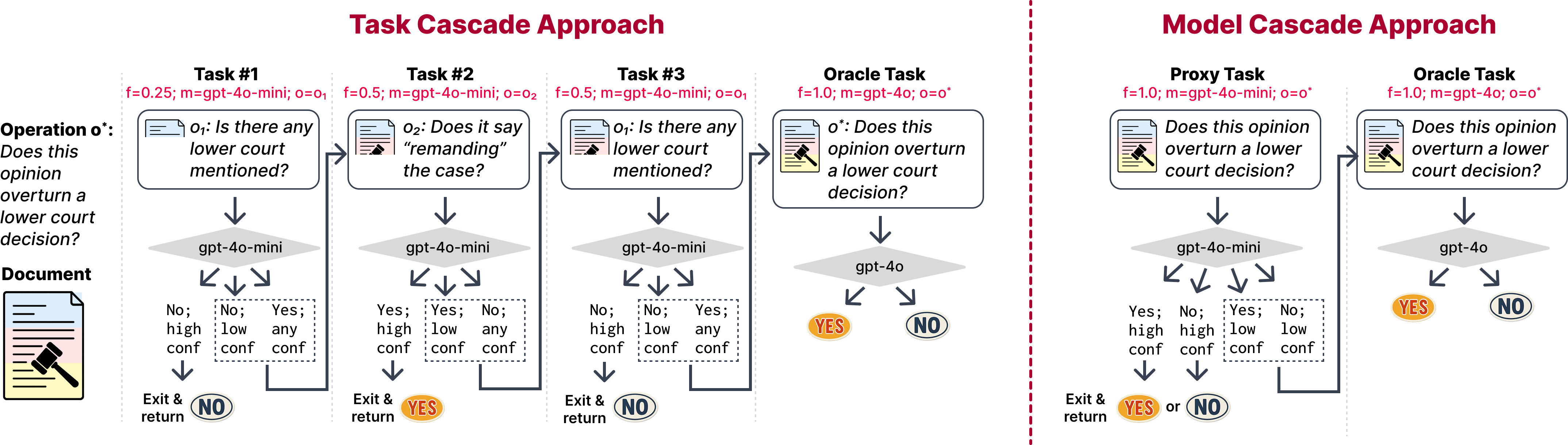}
\vspace{-5pt}
\caption{Task vs. model cascade for determining whether a Supreme Court opinion overturns a lower court decision in \Cref{ex:supreme-court-task}. In a task cascade (left), each stage (or task) is defined by a document fraction ($f$), a model ($m$), and an operation ($o$). Surrogate operations may be reused at different document fractions (as in tasks 1 and 3). The final oracle task applies the original user-specified operation ($o^*$) on the full document ($f=1.0$) with the oracle model ($m=$gpt-4o) if prior tasks cannot confidently resolve the input. In contrast, a model cascade (right) applies the same operation $o^*$ to the full document at each stage, varying only the model.}
\vspace{-8pt}
\label{fig:task-cascade}
\end{figure*}

Large language models (LLMs) are increasingly being integrated as operators in data management systems, enabling users to analyze unstructured text columns more easily. This line of work spans SQL extensions to databases, providing LLM-powered {\em map} and {\em filter} operations~\cite{databricks-llm, duckdb-llm, alloydb-llm,snowflake-llm}, as well as Python-based declarative frameworks for executing pipelines of LLM-powered operators~\cite{liu2024declarative, patel2024lotus, shankar2024docetl, wang2025aop, anderson2024design}. In all cases, users describe an operation in natural language, with the system executing it by invoking an LLM on each text value---hereafter referred to as a {\em document}---in a row-by-row fashion. Consider the following example:

\begin{example}[Supreme Court Opinion Analysis]
\label{ex:supreme-court-task}
Suppose a legal analyst is working with a large collection of Supreme Court opinions. The analyst’s goal is to determine, for each opinion, whether it overturns a lower court decision. In this setting, each {\em document} is a long, complex legal text, and the analyst poses an {\em operation} in natural language: {\em ``Does this opinion overturn a lower court decision?''}
\end{example}

State-of-the-art LLMs like GPT-4o and GPT-4.1 can achieve high accuracy on tasks such as \Cref{ex:supreme-court-task}. However, commercial LLM APIs charge by the number of input and output tokens, making large-scale analysis costly. As a result, users look for ways to cut inference costs without losing much of the accuracy provided by the best models. They are often willing to accept a modest drop in accuracy---e.g., targeting 90\% of GPT-4o's accuracy---in exchange for significant savings~\cite{russo2025abacus, patel2024lotus}. Although cheaper models exist (such as GPT-4o-mini, which is nearly $17\times$ cheaper than GPT-4o), they often deliver much lower accuracy.

\topic{Prior Work and Limitations} The standard approach to reduce costs of map and filter operations with machine learning models, while preserving output accuracy, is to use a {\em model cascade}~\cite{kang2017noscope, anderson2019physical, kang2022tasti, russo2023accelerating, Kang2021AcceleratingAA}; with recent systems all adopting it as-is for LLM-powered operators~\cite{chenfrugalgpt, patel2024lotus, yue2024large}. Here, the system first applies a cheaper {\em proxy} model to each input document. If the proxy's confidence in its prediction exceeds a system-selected confidence threshold, the system accepts the proxy's output. Otherwise, the system escalates the input to a more accurate but more expensive {\em oracle} model (e.g., the best available LLM). Users specify a performance target (e.g., 90\% of the oracle's accuracy), and the system tunes the confidence thresholds to meet this target while minimizing cost. 

However, the model cascades framework, by focusing only on swapping models, underutilizes optimization opportunities. For instance, when no cheaper model can perform the operation accurately, documents will end up being processed by the oracle. In fact, on medical or legal tasks, cheaper models may have as little as 30\% of the accuracy of expensive models~\cite{vals-ai, fan2025lexam,magnini2025open}, with the gap being even wider for long-context reasoning~\cite{wu2025emergence, liu2024lost}. Moreover, model cascades always process the full document, even though many operations require only a small, relevant portion of the text. 

\topic{Change the Task, Not Just the Model} Our key insight is that {\em effective cascades should vary the operation as well as the portion of the document}, in addition to the model. Earlier stages of the cascade can execute any operation that is easier for cheaper models---such as a prerequisite check, a decomposed sub-task, or any related operation that is correlated with the original. We refer to such operations as {\em surrogate} operations: variants of the original operation that are easier for a proxy model to perform.  In \Cref{ex:supreme-court-task}, rather than having a proxy model decide if an opinion overturns a lower court decision, we can have it check whether the opinion mentions a lower court at all. If not, we don't need to invoke the oracle. 

Another complementary strategy to improve proxy model accuracy is to {\em prune irrelevant context} from each document before inference. For example, instead of providing the entire document to the LLM, we can supply only the sections that are semantically relevant to the operation---such as paragraphs containing ``overturn,'' ``judgment vacated,'' or their synonyms. By eliminating unrelated text, the proxy model is more likely to perform the operation both correctly {\em and} with lower cost~\cite{xu2023retrieval, tan2024lloco}. 

In this paper, we introduce the notion of a \textbf{task cascade}: an ordered sequence of {\em tasks}, each task defined by an operation, an LLM, and a selected fraction of the document to analyze. As illustrated in \Cref{fig:task-cascade}, for \Cref{ex:supreme-court-task}, a task cascade may first check if there is any lower court mentioned in a short excerpt before switching over to a different surrogate,  and later revisit this same operation over even more of the document---before finally applying the original operation with the oracle model on the full document if needed. 

\topic{Task Cascade Challenges} However, designing an effective task cascade is challenging for several reasons. First, discovering effective surrogate operations is hard, and there is an infinitely large search space. While one can use an LLM to author a new operation, a surrogate operation is only beneficial if it is both accurately performed by a cheap model {\em and} if it resolves a substantial fraction of documents. A poor surrogate can potentially make the entire cascade {\em less} efficient than if it were omitted. Second, determining the ideal portion of the document for each task is non-trivial. The goal is to minimize token costs by providing just enough context for an accurate decision; too little text risks errors, while too much (like a full document) negates potential savings and can overwhelm cheaper models. Third, determining the best order in which to apply tasks is difficult. The number of possible sequences grows rapidly with more tasks, and each sequence can differ significantly in cost and effectiveness. A task may perform relatively well overall but poorly when applied only to the harder examples that earlier tasks did not resolve, and thus not meet accuracy constraints. Moreover, LLM APIs offer cost savings (50–90\%) when multiple prompts share the same input prefix. As a result, optimizing task order is not just about accuracy and selectivity of tasks: it also involves structuring inputs to exploit reuse. {\bf\em Overall, jointly optimizing surrogate operations, document fractions, and task orders creates a complex search space that is difficult to reliably navigate.}

\topic{Effective Task Cascades} We introduce an approach for constructing effective task cascades, given a document set, user-specified operation, and a target accuracy with respect to an oracle. We address the aforementioned challenges as follows:

\begin{itemize}[nosep, leftmargin=*, wide=0pt]
\item \textbf{Task ordering.} Given a set of tasks, trying all orderings to find the best one quickly becomes infeasible as the number of tasks grows. We show that optimally ordering tasks is {\sc NP-Hard} through a reduction from {\sc Minimum-Sum-Set-Cover}, and inspired by approximation algorithms for this problem, we develop a greedy algorithm that sequentially adds the task that most reduces total inference cost while satisfying accuracy constraints.

\item \textbf{Confidence threshold selection.} Per task in the cascade, we need to set confidence thresholds that determine whether to accept the proxy model output for that task. A naive approach sets thresholds for each task so that, by a union bound, the overall cascade meets the target accuracy with the desired probability. However, as the number of tasks increases, the naive method becomes overly conservative, making it difficult to find a feasible cascade. We develop a new threshold adjustment algorithm, building on recent statistical results~\cite{waudby2024estimating}, which enables us to achieve accuracy guarantees with far fewer samples.

\item \textbf{Surrogate operation generation.} To generate the set of tasks in the first place, a simple idea is to prompt an LLM agent (e.g., based on OpenAI's o1) for candidate operations, but these surrogates are often not effective for smaller models, as the agent has no way to know which operations are actually easy for cheaper models to perform. Instead, we introduce an {\em iterative} approach, where the agent proposes surrogates, and tests and refines them based on which are actually performed well by proxies.

\item \textbf{Document pruning.} Finally, per task, we want to minimize the portion of the document provided as input, while maintaining accuracy. A naive approach to pruning irrelevant portions might use embedding similarity to select document chunks most related to the user operation, similar to standard retrieval-augmented generation (RAG) techniques; however, this approach often fails for complex or long-context tasks, so we instead train a lightweight classifier---supervised by the oracle LLM---to score each chunk's relevance and use these scores for pruning.
\end{itemize}

Overall, our contributions include:
\begin{enumerate}[nosep, leftmargin=*, wide=0pt]
\item We introduce the concept of {\em task cascades} for LLM-based document processing, generalizing model cascades by allowing each stage to specify a model, operation (original or surrogate), and document fraction.

\item We show that the problem of optimal cascade construction is {\sc NP-Hard}, via a reduction from {\sc Minimum Sum Set Cover}.

\item We develop an automated, agentic approach for constructing effective task cascades---jointly discovering surrogate operations, pruning documents, and greedily assembling task sequences that minimize cost while meeting accuracy targets. We show that our approach can be extended to provide guarantees.

\item We evaluate our method on eight complex document classification tasks drawn from Kaggle and prior LLM-based data processing research, comparing against standard model cascade baselines and analyzing the contribution of each component (task ordering, document pruning, surrogate generation).  Across all workloads at a 90\% target accuracy, {\bf \em our approach reduces inference cost by 48.5\% on average compared to model cascade baselines, and by 86.2\% compared to using the oracle model alone}.
\end{enumerate}

We present the problem and an overview of our approach in \Cref{sec:background}. Next, in \Cref{sec:cascade-construction}, we show the hardness of optimal task ordering and describe our greedy cascade assembly algorithm and new procedure for achieving statistical accuracy guarantees. \all{In \Cref{sec:restructuring}, we describe how we prune irrelevant context in documents, followed by our agentic approach for discovering surrogate operations in \Cref{subsec:surrogate-generation}. Then, in \Cref{sec:cost-model}, we describe how to model the cost of task cascades. Finally, we present our evaluation in \Cref{sec:evaluation} and cover related work in \Cref{sec:related}.}

\section{Problem Setup and Approach}
\label{sec:background}

\begin{figure*}
  \centering
  \vspace{-5pt}
  \includegraphics[width=0.85\linewidth]{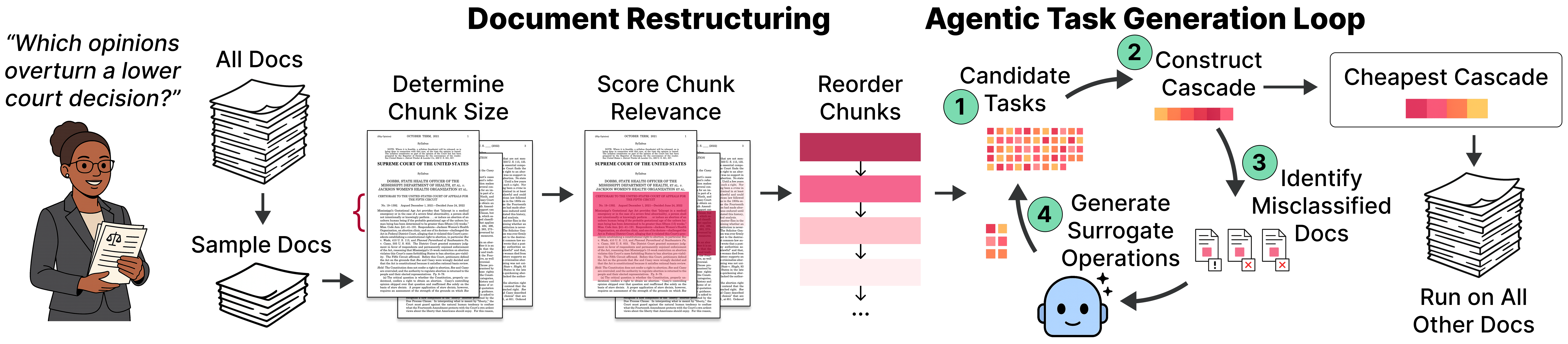}
\vspace{-10pt}
  \caption{Overview of our approach. A user poses a complex classification query over long documents. Our approach restructures each document to prioritize relevant chunks, then iteratively proposes surrogate operations and assembles a cost-effective task cascade to meet accuracy constraints.}
  \label{fig:overview}
  \vspace{-10pt}
\end{figure*}

In this section, we formalize the notion of task cascades and give an overview of our approach to finding an efficient task cascade. A summary of all notation is provided in \Cref{tab:notation}, for convenience.

\subsection{Setup}
\label{sec:background-setup}

\topic{Problem Setting} We focus on the setting where LLM-powered {\em map} and {\em filter} operations in commercial databases~\cite{duckdb-llm, alloydb-llm, databricks-llm, snowflake-llm} or recent systems~\cite{patel2024lotus, liu2024declarative, shankar2024docetl} produce outputs from a fixed set of classes. Formally, the user provides a collection $D$ of documents and a target operation $o_{\mathrm{orig}}$, described in natural language. The goal is to assign each document $x \in D$ to a class $c$ from a predefined set of classes $C$. The user specifies an accuracy target $\alpha \in (0,1]$, requiring the system's predictions to match an oracle model $m_{\mathrm{oracle}}$'s predictions on at least an $\alpha$-fraction of documents.

Given a document $x$ (or, alternatively, a fraction of the document, as described below) and operation $o$ (either the original or a surrogate, defined below), a model $m$ produces a score $p_m(c \mid x, o)$ for each class $c \in C$. The class $c$ with the highest $p_m(c \mid x, o)$ is taken as the prediction of the model. 
For \Cref{ex:supreme-court-task}, $x$ could be a court opinion, $o$ might be {\em ``Does this opinion overturn a lower court decision?''}, and $c \in \{\text{True}, \text{False}\}$. 
To enable early termination in a cascade, we use \textbf{confidence thresholds}: for each class $c \in C$, if the model's confidence $p_m(c \mid x, o) \geq \tau^c$ for some threshold $\tau^c$, the prediction for that document $x$ is accepted; otherwise, $x$ continues to the next stage. 

Instead of always using the full document $x$, we may also use a portion of the document, $x_f$, denoting the top $f$ fraction of $x$, containing the most relevant content for performing $o_{\mathrm{orig}}$. Relevance is determined by a scoring function that ranks document segments by their utility for $o_{\mathrm{orig}}$. So, the confidence threshold now based on $p_m(c \mid x_f, o)$. A \textbf{task} is then defined as $T_i = (m_i, o_i, f_i, \tau_i)$, where:
\begin{itemize}
    \item $m_i$: model (e.g., proxy or oracle)
    \item $o_i$: operation (original or surrogate)
    \item $f_i$: fraction of the document processed
    \item $\tau_i = \{\tau_i^c\}_{c \in C}$: class-specific confidence thresholds
\end{itemize}
A {\bf task cascade} is an ordered sequence of tasks $\pi = (T_1, T_2, \ldots, T_k)$. At inference time, for a given document $x$, we evaluate each $T_i$ in sequence: model $m_i$ is applied to input $x_{f_i}$ with operation $o_i$, yielding a predicted class $c$ and confidence score $p_{m_i}(c \mid x_{f_i}, o_i)$. If this confidence exceeds the corresponding threshold $\tau_i^{c}$, the prediction is accepted, $\mathrm{Cascade}(\pi, x) = c$, and the cascade terminates; otherwise, processing continues to the next task. If none of the $k$ tasks return a confident prediction, the cascade defers to the oracle task $T_{k+1} = (m_{\mathrm{oracle}}, o_{\mathrm{orig}}, 1, \varnothing)$, which applies $o_{\mathrm{orig}}$ to the full document with the oracle, with $\mathrm{Cascade}(\pi, x)$ set to the oracle's prediction. Thus, a document ``leaves'' the cascade at the first confident prediction within $\pi$, or at the oracle if all tasks defer. A traditional model cascade is thus a ``restricted'' task cascade where each $T_i$ uses the same operation $o_i = o_{\mathrm{orig}}$ and processes the entire document ($f_i = 1.0$) at each stage, varying only the model $m_i$.

\topic{Confidence Scores} Each model $m_i$ defines a distribution over output classes $c \in C$, conditioned on operation $o_i$ and document portion $x_{f_i}$, i.e., $p_{m_i}(c \mid x_{f_i}, o_i)$. Inputs to LLMs are tokenized; a \emph{token} is a subword unit such as a word fragment or punctuation mark. LLMs return log-probabilities for each token in the generated output, which can be transformed into confidence scores.

\topic{Cost Model} Each task $T_i = (m_i, o_i, f_i, \tau_i)$ sends a prompt to model $m_i$ by concatenating the document fraction $x_{f_i}$ and the operation $o_i$. We denote the number of tokens or {\em size} as $|\cdot|$, so $|x_{f_i}|$ is the size of the fractional document, and $|o_i|$ is the size of the operation prompt. We let $\lambda_{\mathrm{in}}$ and $\lambda_{\mathrm{cached}}$ denote the cost per new input token and per cached input token, respectively. For classification tasks, the output  cost is typically insignificant and can be ignored, but can be easily incorporated if needed.

The cost to run task $T_i$ on $x_{f_i}$ depends on how much of the input is already cached from previous calls to the same model (as supported by LLM APIs). The cost is:

\begingroup
\setlength\abovedisplayskip{0pt}
\setlength\belowdisplayskip{0.5em}
\begin{align*}
\mathrm{Cost}(T_i, x) = 
\begin{cases} 
    |x_{f_i}|\, \lambda_{\mathrm{in}} + |o_i|\, \lambda_{\mathrm{in}} & \text{if $x_{f_i}$ is new} \\[0.5em]
    |x_{f_j}|\, \lambda_{\mathrm{cached}} + (|x_{f_i}| - |x_{f_j}|)\, \lambda_{\mathrm{in}} + |o_i|\, \lambda_{\mathrm{in}} & \text{if $x_{f_j} \subseteq x_{f_i}$}
\end{cases}
\end{align*}
\endgroup

\noindent where $x_{f_j}$ is the largest previously processed document prefix that is a subset of $x_{f_i}$. 
Throughout, we place the document before the operation in the prompt, maximizing cache utilization of document tokens across multiple tasks using the same model. The total inference cost for document $x$ is then:

\begingroup
\setlength\abovedisplayskip{-5pt}
\setlength\belowdisplayskip{0.5em}
\begin{align*}
\mathrm{Cost}(\pi, x) = \sum_{i=1}^{j^*} \mathrm{Cost}(T_i, x)
\end{align*}
\endgroup

\noindent where $j^*$ is the index of the first task to return a confident prediction (i.e., where $\mathrm{Cascade}(\pi, x)$ is determined), or $j^* = k+1$ if the oracle is called on the original document.

\topic{Problem Statement} Given a document collection $D$, an original operation $o_{\mathrm{orig}}$, a set of classes $C$, an oracle model $m_{\mathrm{oracle}}$, available proxy models $M$, and an accuracy target $\alpha$, we seek to construct a minimal-cost task cascade $\pi = (T_1, T_2, \hdots, T_k)$ as follows:

\begingroup
\setlength\abovedisplayskip{-5pt}
\begin{align*}
    \min_{\pi} \quad & \sum_{x \in D} \mathrm{Cost}(\pi, x) \notag \\
    \text{s.t.} \quad & \Pr\left[\frac{1}{|D|} \sum_{x \in D} \mathbb{I}\big[\mathrm{Cascade}(\pi, x) = T_{k+1}(x)\big] \geq \alpha\right] \geq 1 - \delta
\end{align*}
\endgroup

\noindent where $\mathbb{I}[\cdot]$ is the indicator function and $T_{k+1}$ is the oracle task; i.e., $T_{k+1} = (m_{\mathrm{oracle}}, o_{\mathrm{orig}}, 1, \varnothing)$; so $T_{k+1}(x)$ denotes the oracle's prediction for $x$. $\delta \in (0,1)$ is a user-specified upper bound on the probability that the cascade's accuracy on $D$ falls below $\alpha$.

While we focus on classification with fixed output classes, the task cascade framework can be extended to open-ended map or generation operations, provided there is a reliable automatic evaluator for correctness of an output (i.e., an LLM-as-a-judge~\cite{zheng2023judging} and a means to compute confidence scores, e.g., the geometric mean of per-token probabilities as in~\cite{guptalanguage}.

\subsection{Overview of Our Approach}
\label{sec:background-overview}

\begin{algorithm}[t]
\centering
\begin{minipage}{0.98\linewidth}
\footnotesize
\SetAlgoLined
\KwIn{Documents $D$, original operation $o_{\mathrm{orig}}$, oracle and proxy models}
\KwOut{Task cascade $\pi$ (optionally with statistical guarantees)}
\BlankLine

\tcp{Prepare development set}
Sample $D_{\mathrm{dev}}$ from $D$\;

\tcp{Document Restructuring (\Cref{sec:restructuring})}
\ForEach{document $x \in D_{\mathrm{dev}}$}{
    Train lightweight classifier to score chunk relevance\;
    Reorder $x$ to front-load relevant content\;
}

\tcp{Initialize candidate tasks with original operation at multiple document fractions $\mathcal{F}$}
$\mathcal{T} \gets \{(m, o_{\mathrm{orig}}, f) : m \in \{\text{proxy, oracle}\}, f \in \mathcal{F}\}$\;

\tcp{Agentic Loop for Surrogate Discovery}
\For{$n_a$ iterations}{
    \tcp{Assemble best cascade from current candidates (\Cref{sec:cascade-construction})}
    $\pi \gets$ GreedyCascadeAssembly($\mathcal{T}$, $D_{\mathrm{dev}}$)\;
    
    \tcp{Generate new surrogate operations using agent (\Cref{subsec:surrogate-generation})}
    $\mathcal{O}_{\mathrm{new}} \gets$ LLMAgent($o_{\mathrm{orig}}$, $\pi$)\;
    
    \tcp{Add new candidates to task set}
    $\mathcal{T} \gets \mathcal{T} \cup \{(m, o, f) : o \in \mathcal{O}_{\mathrm{new}}, m \in \{\text{proxy, oracle}\}, f \in \mathcal{F}\}$\;
    
    \If{no cost improvement}{
        \textbf{break}\;
    }
}

\Return $\pi$\;
\caption{\rone{Task Cascades Overview}}
\label{alg:task-cascades-overview}
\end{minipage}%
\end{algorithm}

Our approach constructs a cost-effective task cascade $\pi$ in an offline phase, using a small, representative sample of the document collection $D$, which we'll refer to as $D_{\mathrm{dev}}$, the {\em development set}. \rone{The complete procedure is summarized in \Cref{alg:task-cascades-overview}.}  The construction involves: (i) restructuring documents to support efficient fractional processing, and (ii) an agentic loop that iteratively refines a set of candidate tasks and assembles an optimized cascade. We focus on a two-model setting---a cheap proxy model (e.g., GPT-4o-mini) and a more accurate, expensive oracle (e.g., GPT-4o)---as in model cascades, but our approach can generalize to any number of models. An overview of our approach is depicted in \Cref{fig:overview}.

\subsubsection{Document Restructuring}

We pre-process each document to ensure that the most relevant content for the original operation $o_{\mathrm{orig}}$ appears at the beginning. This reordered layout lets us reuse computation: if a longer document fraction is needed later in the cascade, the tokens from shorter prefixes have already been processed and cached by the model, so we only pay to process the additional new content. We use the oracle model to label regions of each document by relevance to $o_{\mathrm{orig}}$, then train a lightweight classifier to automate reordering for new documents not in $D_{\mathrm{dev}}$. 

\subsubsection{Agentic Loop for Cascade Construction}

To build $\pi$, we iteratively grow a set of candidate task configurations $\mathcal{T}$ and repeatedly assemble the lowest-cost cascade that meets the target accuracy $\alpha$ on $D_{\mathrm{dev}}$. Each candidate task configuration is defined by a model, an operation, and a document fraction---that is, the fraction of the reordered document (starting from the beginning) made available to the model. We use a fixed set $\mathcal{F}$ of fractions (e.g., $\mathcal{F} = \{0.1, 0.25, 0.5, 1.0\}$). Initially, we populate the candidate set $\mathcal{T}$ with all combinations of the original operation $o_{\mathrm{orig}}$, both models, and these document fractions. (Confidence thresholds for each task are determined during cascade assembly.)

Each iteration of the agentic loop proceeds as follows. First, we assemble the lowest-cost cascade, $\pi_{\mathrm{best}}$, that achieves the accuracy target, given the current $\mathcal{T}$. Next, we identify $\pi_{\mathrm{best}}$'s ``failure cases,'' i.e., documents that reach the oracle or are not confidently classified by any task in the cascade. We then prompt an LLM agent to propose new surrogate operations, providing context about these failure cases to inform the agent's proposals. For each new surrogate, we pair it with all models and document fractions and add the resulting tasks to $\mathcal{T}$. We repeat this loop for a fixed number of rounds or until no further cost improvements are found.

\all{Our approach has three main technical components, and we describe them in the remainder of the paper. First, we present cascade assembly (\Cref{sec:cascade-construction}), which operates independently of how tasks in the cascade are produced. We then follow the approach to creating candidate tasks as depicted in \Cref{fig:overview}: document restructuring (\Cref{sec:restructuring}) followed by surrogate generation (\Cref{subsec:surrogate-generation}). These components are highly modular and can each be customized or replaced.}

\begin{table}[t]
\centering
\scriptsize
\vspace{-8pt}
\caption{Table of Notation}\label{tab:notation}
\vspace{-6pt}
\techreport{\begin{tabular}{@{}l p{6cm}@{}}}
\cameraready{\begin{tabular}{@{}l p{8cm}@{}}}
\toprule
\textbf{Symbol} & \textbf{Description} \\
\midrule
$x \in D$ & A document in a collection $D$ \\
$D_{\mathrm{dev}}$ & Development set of documents (to find a cascade) \\
$o_{\mathrm{orig}}$ & Original operation, in natural language \\
$c \in C$ & A class from a predefined set of classes $C$ \\
$\alpha$ & Accuracy target, where $\alpha \in (0,1]$ \\
$m;\ m_{\text{oracle}} \in \mathcal{M}$ & Any model; oracle model in the set of models \\
$p_m(c \mid x_f, o)$ & Score produced by $m$ for class $c$, given doc fraction $x_f$ and operation $o$ \\
$\tau^c$ & Class-specific confidence threshold for class $c$ \\
$(m_i, o_i, f_i)$ & A task config., defined by $m_i$, operation $o_i$, doc fraction $f_i$ \\
$T_i = (m_i, o_i, f_i, \tau_i)$ & A task, comprising a task config. and confidence thresholds $\tau_i$ \\
$\pi = (T_1, T_2, \ldots, T_k)$ & A task cascade, i.e., ordered sequence of tasks \\
$Cost(T_i, x)$ & Cost to run task $T_i$ on document $x$ \\
$|\cdot|$ & Number of tokens or size \\
$\lambda_{in},\ \lambda_{cached}$ & Cost per new (cached) input token \\
$f \in \mathcal{F}$ & Set of document fractions to consider for each task \\
$n_s;\ n_a$ & Number of new surrogate operations per iteration; number of iterations \\
$g$ & Minimum fraction of $D_{\mathrm{dev}}$ classified by a task for inclusion in cascade \\
\bottomrule
\end{tabular}
\vspace{-10pt}
\end{table}

\section{Cascade Construction}
\label{sec:cascade-construction}

We describe how to construct a cost-efficient cascade from a set of candidate task configurations $\mathcal{T} = \{(m_i, o_i, f_i)\}$, comprising model, operation, and document fraction tuples. Our objective is to assemble an ordered sequence of tasks, with associated confidence thresholds, that minimizes inference cost while meeting a target accuracy $\alpha$, as formalized in \Cref{sec:background-setup}. For this section, we assume $\mathcal{T}$ is fixed; \all{\Cref{sec:restructuring} will describe how we reorder content within each document, creating fractional documents, and \Cref{subsec:surrogate-generation} will explain how we generate new surrogate operations. Throughout, we assume a fixed set of document fractions per model and operation (original or surrogate).}

We first show that a substantially simplified version of cascade assembly is {\sc NP-hard} (\Cref{ssec:hardness}), motivating our greedy approach. We then detail a three-step procedure for constructing the cascade (\Cref{subsec:assembling-task-cascade}):
(i) filtering out candidate tasks that cannot meet the accuracy requirement;
(ii) greedily assembling an ordered cascade that maintains per-task accuracy; and
(iii) adjusting the cascade to meet the overall accuracy guarantee.

\subsection{NP-Hardness of Optimal Task Cascade}
\label{ssec:hardness}

Constructing
an optimal task cascade $\pi$
on $D_{\mathrm{dev}}$,
given task 
configurations $\mathcal{T} = \{(m_i, o_i, f_i)\}$ is {\sc NP-Hard}.
We show hardness holds even if
all $m_i = m$, a single proxy
model, and all fractions
$f_i = 1$. Our reduction
is from the {\sc NP-Hard} 
{\sc Min-Sum-Set-Cover} problem~\cite{feige2004approximating} ({\sc MSSC}), 
a variant of the more
standard {\sc Set-Cover} problem.
In {\sc MSSC},
we similarly
pick a subset
of sets that cover a universe
of items; however,
the output is an ordering
of these sets,
where, for each item,
we pay a cost based 
on the earliest set 
that covers it; specifically,
if the $i$th set in the sequence
covers item $j$, then
the cost for covering $j$ is $i$.
Our goal is to minimize 
the total cost of covering all
items.

\begin{theorem}
Constructing
an optimal task cascade is {\sc NP-Hard}.
\end{theorem}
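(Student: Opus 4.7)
The plan is to reduce from Minimum-Sum-Set-Cover (MSSC), using the simplification mentioned in the excerpt: a single proxy model $m$ and all fractions $f_i = 1$. Given an MSSC instance with universe $U = \{u_1,\ldots,u_n\}$ and sets $S_1,\ldots,S_m$ whose union is $U$, I would construct a cascade instance with one document $x_i$ for each item $u_i$ and one candidate task configuration $(m, o_j, 1)$ for each set $S_j$. The proxy's confidence function $p_m(\cdot \mid x_i, o_j)$ is specified so that it returns a confident, correct prediction (matching the oracle) iff $u_i \in S_j$, and falls below any usable threshold otherwise; since the formalism treats $p_m$ as an arbitrary scoring function, any coverage pattern is realizable. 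I would take documents and operations to have equal token counts to keep bookkeeping uniform, and set the accuracy target $\alpha = 1$, so any cascade meets the accuracy constraint by design (confident predictions match the oracle by construction, non-confident documents defer to the oracle).

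Next I would analyze the cost of a candidate cascade $\pi = (T_{j_1},\ldots,T_{j_k})$. Because all tasks share the same model and $f=1$, after the first call on $x_i$ all document tokens sit in the prefix cache, and each subsequent task contributes a fixed increment of roughly $|x|\lambda_{\mathrm{cached}} + |o|\lambda_{\mathrm{in}}$. If $x_i$ is first confidently classified by the $\ell_i$-th task in $\pi$, its total inference cost is an affine function of $\ell_i$; summing over $D_{\mathrm{dev}}$, the cascade cost becomes a positive affine function of $\sum_i \ell_i$ plus constants that are independent of $\pi$. This is precisely the MSSC objective for the induced set ordering. To force coverage, I would pick the oracle-task cost large enough (e.g., exceeding $n$ times the maximum per-task proxy cost) so that any cost-minimal cascade must confidently classify every $x_i$ before reaching $T_{k+1}$, mirroring MSSC's requirement that every item be covered. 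Conversely, any ordering of MSSC sets induces a valid cascade with the same cost under this mapping.

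Combining the two directions, optimally assembling the cascade on the constructed instance is equivalent to optimally ordering the sets in the original MSSC instance, so a polynomial-time optimal cascade algorithm would solve MSSC in polynomial time, contradicting its NP-hardness. The main obstacle I anticipate is making the proxy-specification step airtight: I would state explicitly as a short lemma that the confidence function $p_m(c \mid x, o)$ is permitted to be any function of its inputs within the formalism of \Cref{sec:background-setup}, so that no implicit modeling assumption (e.g., smoothness across $x$ or $o$) constrains the reduction. A secondary detail is verifying strict monotonicity of per-document cost in $\ell_i$; with equal-sized operations and $\lambda_{\mathrm{in}} > 0$ the increment $|o|\lambda_{\mathrm{in}}$ is strictly positive, so reorderings are not cost-equivalent and the MSSC structure is faithfully preserved in the cascade cost.
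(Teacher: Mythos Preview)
Your proposal is correct and follows essentially the same reduction as the paper: items become documents, sets become tasks, membership drives confidence, $\alpha=1$, and the oracle is made prohibitively expensive so the cascade cost collapses to the MSSC objective. The only cosmetic difference is that the paper simplifies by setting $\lambda_{\mathrm{cached}}=0$ and each operation's cost to exactly~1, whereas you keep the general cost parameters and argue that the per-document cost is affine in $\ell_i$ with strictly positive slope; both routes land on the same equivalence.
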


\begin{proof} (Sketch)
Given an instance of {\sc MSSC},
$(U, S)$, where
$U$ denotes the items,
and $S$ denotes the sets,
we generate an instance
of cascade assembly
as follows.
For each item $u \in U$, we create a document $d_u$.
For each set $S_i\in\mathcal{S}$, we create a candidate task $T_i = (m, o_i, 1)$, where $m$ is a single proxy model, and $o_i$ is an operation that predicts \textsc{True} with confidence $1$ on $d_u$ iff $u\in S_i$ and returns a random answer with confidence $0$ otherwise.
Therefore, every task in our cascade will have thresholds set to 1.

We design the cost model so that accessing cached document tokens costs 0; each document is cached after the first task that processes it. Thus, the total cost for processing document tokens is constant across all cascades and can be ignored. We set each operation $o_i$ to have cost 1, so running any task $T_i$ on any $d_u$ incurs cost 1. 
We set the accuracy target $\alpha=1$ and assign the oracle infinite cost, so every document must exit the cascade via some task. Under this cost model, processing any $d_u$ through a cascade $(T_{i_1},T_{i_2},\dots)$ incurs a cost equal to the index of the first $T_{i_j}$ with confidence 1 on $d_u$. Hence the cascade cost
equals the MSSC objective.
\end{proof}

\citet{feige2004approximating}
describe a
greedy algorithm
for {\sc MSSC},
which,
at each stage,
picks the set that covers
the most (uncovered) items,
and has an approximation factor
of $4$. Feige et al. 
demonstrate that further
improvements to the approximation
factor are difficult.
We present an analogous greedy
algorithm,
adapted to handle varying
models, operations, costs, document fractions, and accuracies.

\subsection{Assembling a Task Cascade}
\label{subsec:assembling-task-cascade}

At a high level, assembling a task cascade from a fixed candidate set
of task configurations $\mathcal{T}$ consists of the following steps:
\begin{enumerate}[leftmargin=*]
\item \textbf{Threshold selection and filtering.} 
For each candidate task individually, we find the smallest confidence thresholds such that at least $g$\% of $D_{\mathrm{dev}}$ are classified by the task, 
with accuracy at least~$\alpha$. We remove
any task for which both criteria are not met for any class.
\item \textbf{Cascade assembly.} We build a cascade
by greedily adding the task that most reduces total inference cost at each step,
such that each task achieves 
the target accuracy on the subset of documents it classifies.
\item \textbf{Threshold adjustment.} We adjust the thresholds 
in the assembled cascade to ensure that the overall sequence meets the desired 
accuracy.
\end{enumerate}

\noindent We discuss each of these steps in turn.

\subsubsection{Threshold Selection and Filtering}
\label{ssec:threshold-filtering}

Our candidate set $\mathcal{T}$ 
may contain hundreds of task configurations, many of which
have no chance of meeting the target accuracy, e.g., 
tasks based on poor surrogate operations. 
We eliminate these by discarding any task configuration
that individually 
cannot confidently classify enough documents 
at the required accuracy. 
For the remaining, we determine 
confidence thresholds up front; 
these thresholds are used in subsequent cascade assembly.

We apply the procedure in \Cref{alg:find-thresholds}: for each candidate task, we examine its predictions on $D_{\mathrm{dev}}$ and, for each class, find the lowest confidence threshold such that predictions above this threshold achieve the target accuracy~$\alpha$. If no such threshold exists for a class (i.e., the required accuracy cannot be achieved for any threshold), 
we set it to be $\infty$, 
ensuring any documents that are predicted to be this class by the task
do not exit the cascade. If, across all classes, these thresholds allow the task to classify at least a minimum fraction $g$ of $D_{\mathrm{dev}}$, 
we retain the task and record its thresholds; otherwise, we discard it.
In our implementation, we set $g$ to be $10\%$; $g$ could also be set using statistical bounds (e.g., from Hoeffding's inequality), 
but for high accuracy targets, we would require many examples per task (e.g., $>250$ for $\alpha = 0.95$ and $\delta=0.25$). 
We still ensure statistical guarantees on our eventual
cascade, as we will discuss in \Cref{ssec:guarantees}.

\begin{algorithm}
\footnotesize
\SetAlgoLined
\KwIn{Candidate task config. $T = (m, o, f)$, development set $D_{\mathrm{dev}}$, accuracy target $\alpha$}
\KwOut{Per-class thresholds $\{\tau^c\}$, or discard $T$}
\BlankLine
Initialize $\tau^c \gets \infty$ for all $c$; \texttt{total} $\gets 0$\;
\ForEach{class $c$}{
    $P_c \gets$ confidence scores assigned to class $c$ by $T$ on all $x \in D_{\mathrm{dev}}$ where $c$ is the model's predicted class for $x$ (i.e., $\arg\max_{c'} p_m(c' \mid x_f, o) = c$); sorted by confidence\;
    \ForEach{unique $t$ in $P_c$ (asc)}{
        $S \gets \{x \in D_{\mathrm{dev}} : \arg\max_{c'} p_m(c' \mid x_f, o) = c,\ p_m(c \mid x_f, o) \geq t\}$\;
        \If{$|S| > 0$ \textbf{and} accuracy of assigning $S$ to $c \geq \alpha$}{
            $\tau^c \gets t$; \texttt{total} $+= |S|$\;
            \textbf{break}
        }
    }
}
\Return $\{\tau^c\}$ \textbf{if} \texttt{total} $\geq g \cdot |D_{\mathrm{dev}}|$; \textbf{else} discard $T$\;
\caption{Find Thresholds for a Given Task Config.}
\label{alg:find-thresholds}
\end{algorithm}

\subsubsection{Cascade Assembly}
\label{ssec:greedy-assembly}

\rone{After filtering, we build the cascade greedily (detailed pseudocode is in \Cref{alg:greedy-cascade}).}
We start with the empty cascade $\pi_0$ that invokes the oracle on all documents. At each step, we consider inserting each unused candidate task at the end of $\pi$, creating a new candidate cascade $\pi'$. For each such candidate cascade $\pi'$, we execute it on $D_{\mathrm{dev}}$ to compute its  cost, and to measure the accuracy of every task in $\pi'$ on the subset of documents it classifies (i.e., those reaching that task, with the output having confidence above its threshold). We only consider $\pi'$ where every task, including the newly added one, achieves the target accuracy $\alpha$ on its assigned subset. Among those that meet the per-task accuracy requirement, we select the one with the lowest total cost. If no such candidate cascade reduces the cost relative to the current cascade, we return the current cascade.

By requiring every task in the cascade to independently satisfy the target accuracy on its assigned documents, we may construct a more conservative cascade than strictly necessary. However, since $D_{\mathrm{dev}}$ may be small, a cascade that merely fits the overall accuracy constraint on the development set may not generalize at inference time. Enforcing per-task accuracy thus increases the likelihood that the cascade meets the desired target. \rone{Note that for simplicity, this greedy assembly algorithm operates on tasks with fixed, pre-determined confidence thresholds (set in \Cref{ssec:threshold-filtering}). The subsequent threshold adjustment step (\Cref{ssec:guarantees}) is purely a post-processing step to provide statistical accuracy guarantees, and does not further optimize cascade cost.}

\subsubsection{Meeting Accuracy Targets with Guarantees}
\label{ssec:guarantees}

\begin{algorithm}[h]
\footnotesize
\SetAlgoLined
\KwIn{Cascade thresholds $\mathcal{T}_0$ and validation set $D_{\mathrm{V}}$}
\KwOut{Cascade thresholds guaranteed to meet the target accuracy}
\BlankLine

\For{iteration $i$ until maximum iteration $max\_iter$}{
    $\mathcal{T}_i\leftarrow \{\tau_j+\epsilon_j^i;\forall\tau_j\in\mathcal{T}_0\}$
    
    \If{not $\mathcal{E}(\mathcal{T}_i, D_{\mathrm{V}})$}{
        \If{i > 1}{
            \Return{$\mathcal{T}_{i-1}$}\;            
        }
        \Else{
            \Return{Not Found}\;                    
        }
    }
}
\Return{$\mathcal{T}_{max\_iter}$}\;            

\caption{Threshold Adjustment}
\label{alg:threshold_adjustment}
\end{algorithm}

The cascade $\pi$ assembled so far is only empirically accurate on the development set and may not achieve the desired accuracy $\alpha$ on new data. To obtain statistical guarantees, we randomly partition the development set $D_{\mathrm{dev}}$ into two i.i.d. splits of equal size: a ``training'' split $D_T$, used to construct the cascade $\pi$, and a ``validation'' split $D_V$, used to adjust thresholds and yield a final cascade $\pi^*$ such that $\Pr[\mathrm{Acc}(\pi^*) < \alpha] \leq \delta$. 
\rone{If $D_T$ or $D_V$ are too small or highly variable, the procedure may simply fail to 
find any cascade meeting the target accuracy under the specified bound; 
accordingly, our experiments use at least 75 documents in each split.}

The new task cascade $\pi^*=(T_1^*, ..., T_k^*)$ is composed of tasks $T_i^*=(m_i, o_i, f_i, \tau_i^*)$ where for all tasks $T_i=(m_i, o_i, f_i, \tau_i)$, $m_i$, $o_i$ and $f_i$ are kept as is, but the thresholds $\tau_i$ are adjusted to $\tau_i^*$. For convenience, we will refer to $\mathbf{t} = \{\tau;\tau\in \tau_i^C,\;\forall C\;\forall i\}$ as the set of all thresholds in all tasks in $\pi$. 

At a high level, determining thresholds that guarantee the target accuracy proceeds as follows.  We start by incrementing each threshold in $\mathbf{t}$ by a large non-negative offset $\epsilon_j^{(1)}$, producing an initial, highly conservative set of thresholds. We then iterate through (1) {\em adjusting} this offset, and (2) {\em estimating} whether the resulting thresholds are sufficient. In the adjustment phase, at each iteration $i$, we reduce the offsets to create a new candidate threshold set $\mathbf{t}^{(i)} = {\tau_j + \epsilon_j^{(i)} : \tau_j \in \mathbf{t}}$, where the adjustment strategy $\epsilon_j^{(i)}$ is chosen to decrease with each step; i.e., $\epsilon_j^{(i + k)} \leq \epsilon_j^{(i)}$ for $k \geq 0$. We will discuss our exact adjustment strategy in \Cref{ssec:threshold-adjustment-strategy}.  In the estimation phase, we run the cascade with $\mathbf{t}^{(i)}$ on the validation split $D_V$ and apply an estimation function $\mathcal{E}(\mathbf{t}^{(i)}, D_V)$. This function returns \ttt{True} if the cascade achieves accuracy at least $\alpha$ on $D_V$ with failure probability at most $\delta$, and \ttt{False} otherwise. This procedure is presented in \Cref{alg:threshold_adjustment}, where the two steps of {\em adjust} and {\em estimate} are repeated until a possible maximum number of iterations, and we stop at the first instance where the estimation function determines that the thresholds do not meet the target. If no candidate set passes, we revert to the oracle-only cascade (which never happens in our experiments).

\begin{theorem}\label{thm:guarantees}
For any adjustment strategy $\epsilon$ and appropriate estimator $\mathcal{E}$, \Cref{alg:threshold_adjustment} returns a cascade $\pi^*$ with $\Pr(\mathrm{Acc}(\pi^*) < \alpha) \leq \delta$.
\end{theorem}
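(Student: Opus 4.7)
The plan is to reduce the claim to a statement about the reliability of the estimator $\mathcal{E}$ via a case analysis on how \Cref{alg:threshold_adjustment} terminates. I would first handle the trivial \emph{Not Found} branch: if $\mathcal{E}$ rejects the most conservative candidate $\mathcal{T}_1$, the algorithm falls back to the oracle-only cascade, which matches $m_{\mathrm{oracle}}$ on every document by construction; its accuracy is $1$, so $\mathrm{Acc}(\pi^*) \geq \alpha$ holds deterministically. In the remaining branches, the algorithm returns either $\mathcal{T}_{\text{max\_iter}}$ (if $\mathcal{E}$ passed at every iteration) or $\mathcal{T}_{i-1}$ (if $\mathcal{E}$ first failed at iteration $i > 1$); in both cases, the returned threshold set was explicitly accepted by $\mathcal{E}$ in some prior call.

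Next, letting $\pi_i^*$ denote the cascade induced by threshold set $\mathcal{T}_i$, I would observe that a bad output can only arise from a false acceptance of $\mathcal{E}$. Formally, the event $\{\mathrm{Acc}(\pi^*) < \alpha\}$ is contained in
\begin{equation*}
E^{\star} \;=\; \bigcup_{i=1}^{\text{max\_iter}} \big\{\mathcal{E}(\mathcal{T}_i, D_V) = \mathrm{True} \;\wedge\; \mathrm{Acc}(\pi_i^*) < \alpha\big\},
\end{equation*}
because the returned $\pi^*$ is always some $\pi_i^*$ that previously passed the estimator. This gives $\Pr(\mathrm{Acc}(\pi^*) < \alpha) \leq \Pr(E^{\star})$, reducing the theorem to controlling the probability that $\mathcal{E}$ ever falsely accepts an under-accurate cascade across its (up to) $\text{max\_iter}$ invocations on the same validation set $D_V$.

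The main obstacle is this multiple-testing issue: a per-call estimator with Type I error $\delta$ would only yield $\Pr(E^{\star}) \leq \text{max\_iter} \cdot \delta$ by a union bound, which is too loose. This is precisely what the qualifier \emph{appropriate estimator} in the theorem statement is intended to absorb. I would therefore instantiate $\mathcal{E}$ using an anytime-valid confidence sequence---such as the betting-based bounds of \citet{waudby2024estimating} referenced earlier---whose coverage guarantee holds uniformly across the full sequence of data-dependent tests performed on $D_V$. Under such an instantiation, $\Pr(E^{\star}) \leq \delta$ holds directly, independently of the adjustment schedule $\epsilon$ used to produce $\mathcal{T}_1, \mathcal{T}_2, \ldots$; monotonicity of the offsets only influences \emph{which} cascade is returned, not the correctness of the guarantee. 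Combining this with the deterministic \emph{Not Found} case yields the desired bound $\Pr(\mathrm{Acc}(\pi^*) < \alpha) \leq \delta$.
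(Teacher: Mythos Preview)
Your case analysis and the containment $\{\mathrm{Acc}(\pi^*) < \alpha\} \subseteq E^\star$ are fine, but the step bounding $\Pr(E^\star) \leq \delta$ via anytime-validity is where the argument breaks. The betting-based confidence sequences of \citet{waudby2024estimating} are time-uniform for a \emph{single} parameter as the sample grows; they do not provide simultaneous Type~I control across \emph{multiple distinct} hypotheses tested on the same validation set. Each threshold set $\mathcal{T}_i$ induces a different Bernoulli mean $\mu_i = \mathrm{Acc}(\pi_i^*)$, so the false-acceptance events for the bad indices are tests of separate nulls, and the estimator of Lemma~\ref{lemma:estimation} only bounds each one individually at level $\delta$. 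Nothing in the anytime-valid machinery collapses their union for free, so your claim that $\Pr(E^\star)\le\delta$ holds ``directly'' is not justified.

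The paper sidesteps the multiple-testing issue by a tighter containment than your $E^\star$. Since $D_T$ and $D_V$ are independent, the schedule $\mathcal{T}_1, \mathcal{T}_2, \ldots$ and hence the set of bad indices are fixed before $D_V$ is drawn; let $i^\star$ denote the \emph{smallest} bad index. The algorithm only returns $\mathcal{T}_r$ after $\mathcal{E}$ has accepted every one of $\mathcal{T}_1, \ldots, \mathcal{T}_r$, so if the returned $\mathcal{T}_r$ is bad then $r \geq i^\star$ and in particular $\mathcal{E}(\mathcal{T}_{i^\star}, D_V) = \mathrm{True}$. Thus the bad-output event is contained in the \emph{single} false-acceptance event at the fixed index $i^\star$, which has probability at most $\delta$ by one application of the per-test bound---no uniform-over-tests property of the estimator is needed.
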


\Cref{thm:guarantees} shows that \Cref{alg:threshold_adjustment} yields the desired accuracy guarantee for any valid estimator and monotonic adjustment schedule. A valid estimator $\mathcal{E}$ must provide a statistical test that, given a candidate set of thresholds and a validation set, returns \ttt{True} only if it can certify---based on the observed sample---that the cascade achieves accuracy at least $\alpha$ with failure probability at most $\delta$. This is a classic problem of mean estimation from i.i.d. Bernoulli samples, and any sound concentration bound can be used; e.g., Hoeffding's inequality. We adopt the estimator from~\citet{waudby2024estimating}, which uses both the sample mean and variance to produce tighter bounds---particularly when the variance is small---compared to Hoeffding's inequality, which relies only on the mean. The proof of \Cref{thm:guarantees} and the full specification of $\mathcal{E}$ are given in \Cref{app:guarantees}\cameraready{ in our technical report~\cite{shankar2026task}}. We also account for multiple applications of $\mathcal{E}$ within the adjustment loop, ensuring the total probability of failure remains bounded by $\delta$.

\subsubsection{Threshold Adjustment Strategy} 
\label{ssec:threshold-adjustment-strategy}

We now describe our strategy for adjusting thresholds to ensure the final cascade meets the desired accuracy guarantee. A naive approach is to decrement each threshold by a fixed amount at every iteration (e.g., $-0.01$ per step). However, LLM-generated confidence scores are often poorly calibrated and heavily concentrated near 1, making uniform decrements ineffective~\cite{jiang2021can, kadavath2022language, desai2020calibration}. Instead, for each class $c$ of each task $T = (m, o, f, \tau)$, we collect all confidence scores greater than the initial threshold $\tau^c$ assigned by the model to predictions of class $c$ on the training split $D_T$. We sort these confidence scores in ascending order: $\tau^c < p_1 < p_2 < \cdots < p_k$, where $p_1$ is the lowest confidence just above $\tau^c$ and $p_k$ is the highest.

We introduce a \emph{shift index} $s$ that starts at $s_{\max}$ and decrements by 1 at each iteration. At each iteration, for every class, we set its threshold to $p_s$, corresponding to a more conservative cutoff when $s$ is large, and a less conservative one as $s$ decreases. Specifically, we start with $p_{s_{\max}}$ (the most conservative threshold), and at each subsequent iteration decrement $s$ by 1, thereby shifting the threshold closer to the original $\tau^c$. Once $s = 0$, we use the original threshold $\tau^c$. After updating all thresholds in this way for an iteration, we evaluate the cascade on $D_V$ and apply the estimator $\mathcal{E}$ to check the accuracy guarantee. If the guarantee is met, we continue; if not, we return the thresholds from the previous shift. Our complete algorithm is given in \Cref{app:guarantees-adjustment}\cameraready{ in our technical report~\cite{shankar2026task}}. In our implementation, we set $s_{\max} = 5$, but values between 3 and 10 are generally effective.

\all{\section{Document Restructuring}
\label{sec:restructuring}}

\begin{figure}[h]
\centering
\vspace{-5pt}
\resizebox{\ifdim\columnwidth=\textwidth 0.65\textwidth\else 0.85\columnwidth\fi}{!}{
\begin{tikzpicture}[
  node distance=0.6cm and 0.4cm,  %
  box/.style={draw, rounded corners=2pt, align=left, font=\footnotesize, inner sep=6pt, text width=3.8cm, fill=gray!5},
  arrowlabel/.style={font=\scriptsize, fill=white, inner sep=1pt},
  highlight/.style={fill=blue!10}
]
\node[box, text width=3.5cm] (original) {  %
\textbf{Original:}\\
The patient was admitted to the emergency department complaining of severe chest pain and shortness of breath...
};
\node[box, right=of original, text width=5.5cm, font=\scriptsize] (standardized) {  %
\textbf{Standardized:}\\
\texttt{Line \#1. The patient was}\\
\texttt{Line \#2. admitted to the emergency}\\
\texttt{Line \#3. department complaining of}\\[0.3em]
\begin{tcolorbox}[
    colback=blue!10, boxrule=0pt, left=0pt, right=0pt, top=0pt, bottom=0pt,
    enhanced, sharp corners, width=\linewidth
    ]
    \texttt{Line \#4. severe chest pain and}\\
    \texttt{Line \#5. shortness of breath...}
\end{tcolorbox}
};

\draw[->, thick] (original.east) -- (standardized.west)
  node[midway, above, font=\scriptsize, fill=white, inner sep=1pt] {};

\node[box, anchor=north, text width=9.2cm, fill=blue!5] (oracle) 
  at ($(original.south west)!0.5!(standardized.south east) + (0,-0.4cm)$) {
\textbf{Oracle Prompt:} "Identify line ranges needed for the operation: \textit{Does this note indicate an adverse physical reaction?}"\\[0.3em]
\textbf{Oracle Response:} {\scriptsize \texttt{{"ranges": [\{"start\_line": 4, "end\_line": 5\}]}}}
};

\end{tikzpicture}
} %
\caption{Standardizing documents. This format allows the oracle to specify relevant content as line ranges.}
\label{fig:standardization}
\techreport{\vspace{-10pt}}
\end{figure}
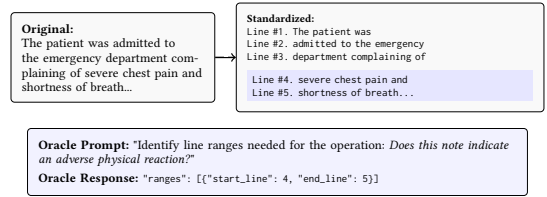

Tasks in a cascade can process different fractions of each document, allowing models to focus on the most relevant content for the user-defined operation. To reduce cost under our prefix-based cost model (\Cref{sec:background-setup}), we reorder each document so that the most relevant text appears first. This restructuring step is independent of the choice of surrogate operations or cascade configuration; any method for ranking document segments by relevance can be used.
\rone{While document restructuring is optional, skipping it forgoes the prefix-caching 
discounts that substantially reduce cost.}

To restructure documents effectively, we first choose how to segment them. Some user operations require only short spans of text (e.g., a sentence), while others need broader context. We represent each document as a sequence of lines and select a segmentation {\em granularity}---the number of consecutive lines to treat as a unit. Each document is then divided into contiguous {\em chunks} of this granularity. Then, we train a lightweight classifier to score and rank chunks in a document by their relevance to the user-defined operation. The following sections describe how we determine the chunk granularity and train the relevance classifier.

\topic{Determining Chunk Granularity} To select the chunk granularity for the user-specified operation $o_{\mathrm{orig}}$, we use the oracle LLM and the development set $D_{\mathrm{dev}}$:

\begin{enumerate}[nosep, left=0pt]
\item We split each document into lines of 80 characters, each prefixed with a line number (see \Cref{fig:standardization}).
\item For each document, we ask the oracle LLM to return the minimal set of non-overlapping line ranges needed to answer $o_{\mathrm{orig}}$. The oracle returns these line ranges in a structured format (e.g., \ttt{{"start\_line": 5, "end\_line": 12}}).
\item We merge overlapping or adjacent line ranges and create a reduced document containing only the union of these lines. We run the oracle LLM on this reduced version and check whether its answer matches the answer on the full document.
\item If, across $D_{\mathrm{dev}}$, this reduced document yields the correct answer for at least an $\alpha$ fraction of documents, we move to (5). Otherwise, we expand each range by one line before the start and one line after the end (while staying within document boundaries), merge again, and repeat the check. This process continues until the target accuracy is achieved or a maximum of $e = 3$ expansions.
\item We set the chunk granularity to the average length, in lines, of the final merged ranges across all documents in $D_{\mathrm{dev}}$.
\end{enumerate}

\noindent Consider the following example for a document $x$ to illustrate the expansion and merging procedure. Suppose the oracle initially selects the ranges \ttt{[23,25]} and \ttt{[28,30]}, with average length $(3 + 3) / 2 = 3$. If the answers produced by running $o_{\mathrm{orig}}$ on the selected lines do not match the answers on the full document for at least an $\alpha$ fraction of documents in $D_{\mathrm{dev}}$, we expand each range for $x$ by one line at both ends, resulting in \ttt{[22,26]} and \ttt{[27,31]}, with average length $(5 + 5) / 2 = 5$. If another expansion is needed, we obtain \ttt{[21,27]} and \ttt{[26,32]}, which now overlap and are merged into a single range, \ttt{[21,32]}, with average length $12$.

\topic{Scoring Chunks by Relevance} Once we determine the segmentation granularity $s$, we divide each document into contiguous sets of $s$ lines---referred to as {\em chunks}. Our goal is to train a lightweight {\em relevance classifier} that predicts whether a chunk contains information required to perform $o_{\mathrm{orig}}$.

\noindent We first partition $D_{\mathrm{dev}}$ into ``training'' ($D_{\text{train}}$) and ``held-out test'' ($D_{\text{test}}$) splits for constructing and evaluating the classifier. From the previous step, we already have starting lines of relevant chunks for all documents. We construct an oracle-labeled dataset for training the relevance classifier as follows:

\begin{enumerate}[nosep, left=0pt]
\item For each starting line identified by the oracle, we extract the $s$-line chunk beginning at that line and label it as relevant ($r = 1$). To construct irrelevant examples, we slide a non-overlapping window of $s$ lines from the start of the document, and include a chunk as irrelevant ($r = 0$) only if it does not overlap with any relevant chunk. Chunks that overlap with any relevant chunk are excluded from the irrelevant set.
\item For each document in $D_{\text{train}}$ and $D_{\text{test}}$, we collect all labeled relevant and irrelevant chunks as described above. The training and test datasets are formed by taking the union of all labeled chunks from each document in $D_{\text{train}}$ and $D_{\text{test}}$, respectively.
\end{enumerate}

Each chunk $c$ in $D_{\text{train}}$ and $D_{\text{test}}$ is embedded using OpenAI's \ttt{text-embedding-3-small}, producing pairs $(\mathrm{embedding}(c), r)$. Because relevant chunks are rare, we upsample the relevant class during training. We fit a logistic regression model with weights initialized to the embedding of $o_{\mathrm{orig}}$ to predict the relevance for each chunk. The model is trained to minimize binary cross-entropy loss on the chunks from $D_{\text{train}}$, using the Adam optimizer and early stopping based on F1 score on the chunks from $D_{\text{test}}$.

At inference time, a new document is partitioned into consecutive, non-overlapping chunks of $s$ lines; each chunk is embedded and assigned a probability of relevance by the trained model. Chunks are then sorted by predicted probability (from highest to lowest) and concatenated to yield the reordered document.

\all{\section{Surrogate Operation Generation}
\label{subsec:surrogate-generation}}

So far, we have described how to assemble a cost-optimal cascade from a fixed set of candidate task configurations. We now address the question of {\em how to create this candidate set}. \all{We assume that all documents have already been reordered so that content most relevant to the original operation appears at the beginning (see \Cref{sec:restructuring} for details), and we only discuss how to generate surrogate operations.} At a high level, our approach proceeds as follows. We initialize $\mathcal{T}$ to include all combinations of $o_{\mathrm{orig}}$, each model $m \in \mathcal{M}$, and each fraction $f \in \mathcal{F}$. We then enter a loop that, at each round, (1) assembles the best cascade from the current $\mathcal{T}$, (2) analyzes the failure cases of this cascade, and (3) expands $\mathcal{T}$ using an LLM agent. We next explain what constitutes a valid surrogate operation and how surrogate operations are elicited from the agent, and finally, how the agentic loop is implemented.

\topic{Valid Surrogate Operations} In order to generate surrogate operations, we require some notion of validity. A valid surrogate operation is any operation whose outputs form a subset of the original operation’s classes. For classification tasks, a valid surrogate operation is simply any operation whose output space is a subset of the original classes (optionally including a special ``none of the above'' or $-1$ label). Suppose the original task is to classify biomedical abstracts into four classes: 0 (Research Article), 1 (Review Article), 2 (Clinical Trial), and 3 (Case Report). A surrogate operation could be: {\em ``If the abstract contains phrases like ``randomized trial'' or ``double-blind,'' output 2 (Clinical Trial); otherwise, output $-1$.''} \space Such heuristics enable proxy models to confidently resolve certain examples with simple patterns, even if they cannot perform full multi-class classification.

\topic{Eliciting Surrogate Operations from the Agent} Using the aforementioned notion of surrogate operations, whenever we query the LLM agent for surrogates, we include the following context:
\begin{enumerate}[leftmargin=*]
    \item The user-defined operation, $o_{\mathrm{orig}}$.
    \item ``Failure cases,'' or examples of documents that are {\em not} classified by any existing proxy task in the current best cascade (i.e., documents that reach the oracle). To ensure we can fit multiple examples of documents within the agent's prompt window, we use the oracle model to identify and concatenate only the most relevant snippets from each document---those that support or explain the oracle’s prediction---rather than the full document.
    \item For each task in the current cascade: summary statistics including whether it was selected into the cascade, its document coverage (i.e., number of documents it resolves), and up to 10 ``hard examples,'' or borderline cases where the task predicted incorrectly with high confidence, typically those with confidence just above the acceptance threshold.
    \item Explicit instructions to propose new surrogate operations that (i) are simpler than the original, (ii) output any subset of the original classes, or (iii) target weaknesses in the current cascade. The prompt includes concrete examples and specifies an output format (\ttt{SURROGATE PROMPT: ...}, \ttt{RATIONALE: ...}), so we can parse the surrogate operations.  See \Cref{app:surrogate-operation-gen}\cameraready{ in our technical report~\cite{shankar2026task}} for details.
\end{enumerate}

\topic{Agentic Loop} Now that we have a mechanism for eliciting surrogate operations from the agent, we describe the full agentic loop. At each round:
(1) We assemble the best cascade using the current candidate set, via the procedure in \Cref{subsec:assembling-task-cascade};  
(2) We update the agent's prompt with the new round's failure cases and summary statistics; and  
(3) The agent proposes $n_s$ new surrogate operations. Each surrogate is paired with every model and document fraction, resulting in $|\mathcal{M}| \times |\mathcal{F}| \times n_s$ new candidate tasks, which are added to $\mathcal{T}$.  
We repeat this process until $n_a$ rounds have completed or no further cost improvement is observed.  
Parameter choices (e.g., $n_s$, $n_a$, document fractions, agent model) are flexible (see \Cref{sec:background}); we $n_s=3$, $n_a=3$, and OpenAI's o1-mini as the agent. See \Cref{app:surrogate-operation-gen}\cameraready{ in our technical report~\cite{shankar2026task}} for the full algorithm.

\all{\section{Cost Model for Optimization}
\label{sec:cost-model}}

\all{In this section, we derive the optimization cost for constructing task cascades and show how hyperparameter choices can control the optimization budget. The total optimization cost comprises three components: document restructuring, surrogate operation generation and evaluation, and agent loop costs. For simplicity, we omit candidate task output token costs (which are constant and negligible---e.g., 1 token for binary classification, at most a few tokens for multi-class tasks). Note that in practice, the actual optimization cost may be lower than our derivation suggests: LLM API providers employ prompt caching, and evaluating multiple candidate tasks on the same documents can lead to cache hits that reduce costs. Our optimization implementation does not explicitly optimize for cache reuse, so the formulas below represent a conservative estimate on the true optimization cost.}

\topic{Initial labeling and document restructuring} 
\all{We first run the oracle model once on the full development set to obtain ``ground truth'' labels for computing surrogate task accuracy: $C_{\text{labels}} = N(L + P)\lambda_o$, where $N$ is the number of development documents, $L$ is the average number of tokens per document, $P$ is the number of prompt tokens per call, and $\lambda_o$ is the oracle model price per token.}

\all{Next, we perform document restructuring to prepare documents for fractional processing. We run the oracle model once more to identify relevant line ranges or chunks of the document. We also generate embeddings for all chunks:
\[
C_{\text{doc}} = C_{\text{labels}} + N(L+P)\lambda_o + NL\lambda_{\text{emb}} = N(L+P)(2\lambda_o) + NL\lambda_{\text{emb}}
\]
where $\lambda_{\text{emb}}$ is the embedding price per token.}

\topic{Surrogate operation generation and evaluation.} 
\all{Over $n_a$ iterations, the agent proposes $n_s$ new surrogate operations per iteration, yielding $n_s n_a$ total surrogates. Each surrogate is paired with each document fraction $f \in \mathcal{F}$ and evaluated on the development set using both the proxy and oracle models. Since we evaluate at $|\mathcal{F}|$ different fractions, the total fractional content processed per document is $S_f = \sum_{i=1}^{|\mathcal{F}|} f_i$ (e.g., for $\mathcal{F} = \{0.15, 0.25, 0.5, 1.0\}$, we have $S_f = 1.90$). With $\lambda_p$ the proxy model price per token, the cost for all surrogate evaluations is:
\[
C_{\text{eval}} = N n_s n_a\left[L S_f (\lambda_o+\lambda_p) + P |\mathcal{F}| (\lambda_o+\lambda_p)\right]
\]
\vspace{-15pt}}

\topic{Agent loop} 
\all{The LLM agent (o1-mini) generates new surrogates over $n_a$ iterations, consuming $T_{\text{agent}}$ input tokens and producing $O_{\text{agent}}$ output tokens per round, with prices $\lambda_{o1,\text{in}}$ and $\lambda_{o1,\text{out}}$. Notably, this cost is independent of the development set size $N$ and thus constant: $C_{\text{agent}} = n_a(T_{\text{agent}}\lambda_{o1,\text{in}} + O_{\text{agent}}\lambda_{o1,\text{out}})$.}

\topic{Total optimization cost} \all{The total optimization cost is $C_{\text{opt}} = C_{\text{doc}} + C_{\text{eval}} + C_{\text{agent}}$. The dominant component is $C_{\text{eval}}$, the cost of evaluating candidate surrogate tasks on the development set. This cost can be reduced by: (1) decreasing $n_s$ or $n_a$ to generate fewer surrogates, (2) using fewer document fractions (smaller $|\mathcal{F}|$), (3) excluding the oracle model when generating candidate tasks (replacing $\lambda_o + \lambda_p$ with $\lambda_p$ in $C_{\text{eval}}$), or (4) reducing the development set size $N$, though (4) makes statistical guarantees harder to achieve.  In our experiments, we also evaluate a cheaper variant using strategy (3).}

\section{Evaluation}
\label{sec:evaluation}

We design our experiments to answer the following questions:\footnote{Code available at \href{https://github.com/ucbepic/task-cascades}{\textcolor{blue}{https://github.com/ucbepic/task-cascades}}.}
\begin{enumerate}[nosep,leftmargin=1em]
    \item[\textbf{Q1}] Do task cascades reduce inference cost compared to model cascades, and which components of our approach contribute most to these savings?
    \item[\textbf{Q2}] How do cost and accuracy change as target accuracies vary?
    \item[\textbf{Q3}] How consistent are task cascade costs and accuracy across runs?
    \item[\textbf{Q4}] \all{What is the cost of building task cascades?}
\end{enumerate}

Across all workloads at a 90\% target accuracy, {\bf \em our approach reduces inference cost by 48.5\% on average compared to model cascade baselines (with comparable guarantees), and by 86.2\% compared to using the oracle model alone}, highlighting the effectiveness of task cascades (\textbf{Q1}).  Every component of our approach (i.e., surrogate operation discovery, document pruning, and cascade ordering) contributes to cost reduction. For \textbf{Q2}, task cascades consistently improve the cost–accuracy tradeoff as the target accuracy varies. The largest gains occur at lower target accuracies, likely because it is easier to identify surrogate operations that satisfy relaxed accuracy targets. On more challenging workloads, the performance gap between model cascades and task cascades narrows at higher accuracy targets.  \rone{For \textbf{Q3}, we find that task cascades and model cascades both exhibit high variance, but task cascades consistently achieve lower mean and median costs.} \all{Finally, for \textbf{Q4}, we show that optimization costs are negligible at scale, and that even considering optimization cost, task cascades are cheaper than model cascades.}

We first describe our experimental setup (\Cref{sec:setup}). Next, we present results for each research question: cost-effectiveness across variations (\Cref{subsubsec:cost-effectiveness}), performance under varying accuracy targets (\Cref{subsubsec:acrosstargets}), consistency across repeated runs (\Cref{subsubsec:stability}), \all{and optimization costs and break-even analysis (\Cref{subsubsec:optimization-costs}). Finally, we provide insights from analyzing task cascades, including practical deployment considerations and parameter selection guidance (\Cref{sec:qualitative-insights}).
We also detail examples of surrogate tasks and the number of tasks in each cascade in \Cref{app:examples-surr}\cameraready{ in our technical report~\cite{shankar2026task}}.}

\subsection{Setup}
\label{sec:setup}

\subsubsection{Datasets and Workloads} We evaluate our approach on eight document classification workloads chosen to span a range of domains, document lengths, and task complexities (\Cref{tab:dataset-stats}). Five are drawn from prior research on LLM-powered data processing systems. To broaden coverage, we introduce three additional tasks from Kaggle that require multi-step or domain-specific reasoning.

\begin{table}
\centering
\techreport{\vspace{-5pt}}
\footnotesize
\caption{
\all{Workloads with average word counts and corpus sizes. ``\# Docs'' reports the approximate number of documents in the dataset at release time; actual corpus sizes may have changed in later versions.}
}
\label{tab:dataset-stats}
\vspace{-10pt}
\techreport{\begin{tabular}{@{}l p{4.5cm} r r@{}}}
\cameraready{\begin{tabular}{@{}l p{9cm} r r@{}}}
\toprule
\textbf{Dataset} & \textbf{Description of $o_{\mathrm{orig}}$} & \textbf{Avg.\ words} & \all{\textbf{\# Docs}} \\
\midrule
\textsc{agnews}       & Classify news article summaries into one of four topics: \emph{World}, \emph{Sports}, \emph{Business}, or \emph{Science/Tech}. & \textasciitilde37 & \textasciitilde128k \\
\textsc{court}        & Determine if a U.S.\ Supreme Court opinion reverses the lower-court ruling. & \textasciitilde3.7k & \textasciitilde36k \\
\textsc{enron}        & Identify emails sent by C-suite or VP-level executives in the Enron corpus. & \textasciitilde1.5k & \textasciitilde500k \\
\textsc{fever}        & Decide whether a natural-language claim is supported by the provided evidence snippets. & \textasciitilde5.1k & ~185k \\
\textsc{games}        & Determine whether a review praises a different game more than the one being reviewed. & \textasciitilde1.1k & \textasciitilde6.4M \\
\textsc{legal}        & Detect covenants not to sue or IP no-challenge clauses in license agreements. & \textasciitilde8.0k & 510 \\
\textsc{pubmed}       & Classify biomedical articles into one of six study types: \emph{RCT}, \emph{Observational}, \emph{Meta-analysis}, \emph{Bench/Lab}, \emph{Computational}, or \emph{Review}. & \textasciitilde3.1k & \textasciitilde133k \\
\textsc{wiki\_talk}   & Predict whether a Wikipedia Talk-page discussion culminates in an edit revert. & \textasciitilde0.9k & \textasciitilde125k \\
\bottomrule
\end{tabular}
\vspace{-15pt}
\end{table}

\topic{From Prior Work} Workloads (i)-(ii) are derived from
prior work on model cascades~\cite{chenfrugalgpt,patel2024lotus}, while (iii)-(v) are from work on LLM-powered data processing~\cite{liu2024declarative, shankar2024docetl}: (i) \textsc{agnews}~\cite{del2005ranking}, from~\cite{chenfrugalgpt}, is a four-way classification task with extremely short documents. We include \textsc{agnews} as an example of a simple workload where the proxy already meets the accuracy target, to test whether task cascades can provide additional benefits. (ii) \textsc{fever}~\cite{thorne2018fever}, from~\cite{patel2024lotus}, is a claim verification task: given a natural-language claim and a set of evidence snippets, the LLM must decide whether the claim is supported. Since each document involves different claims and evidence, reusable filters and surrogate tasks are difficult to learn, making \textsc{fever} particularly challenging for our approach.
(iii) \textsc{enron}~\cite{klimt2004enron}, from~\cite{liu2024declarative}, consists of emails from the Enron corpus. \rthree{The original task in~\cite{liu2024declarative} involved fraud detection, but fraud cases are extremely rare in the dataset, so a trivial ``always predict no fraud'' strategy achieves very high accuracy. To create a more balanced classification problem suitable for evaluating task cascades, we adapt the original fraud detection task to identifying emails sent by C-suite or VP-level executives, which provides a more even class distribution}.

(iv) \textsc{legal}~\cite{hendrycks2021cuad}, from~\cite{shankar2024docetl}, tests detection of specific legal clauses in contracts; we convert the original span extraction task into binary classification by focusing on the presence or absence of a single clause type. (v) \textsc{games}~\cite{sobkowicz2016steam}, from~\cite{shankar2024docetl}, is adapted to a more challenging classification task: determining whether a video game review praises another game more than the one being reviewed.

\topic{New Workloads} We introduce three additional workloads from Kaggle and recent NLP datasets. \textsc{court}~\cite{gqfiddler_scotus} involves classifying U.S. Supreme Court opinions and requires multi-step reasoning over lengthy, complex texts. \textsc{wiki\_talk}~\cite{danescu2012echoes,chang2020convokit} challenges models to predict whether a Wikipedia Talk-page discussion culminates in an edit revert, capturing dynamics of online discourse. Finally, \textsc{pubmed}~\cite{cohan-etal-2018-discourse} is a multi-class classification task on long biomedical articles, testing our approach under both large document size and an expanded label space.

For each workload, we sample 1,000 documents (200 for development, 800 for test), except \textsc{legal}, which contains only 509 documents (150 development, 359 test). \rthree{We sampled datasets to stay within a \$5,000 API budget; running all variants described below across full datasets with multiple accuracy targets and trials would have cost hundreds of thousands of dollars.}

All workloads represent binary classification tasks except \textsc{agnews} and \textsc{pubmed}, which are multi-class. Prompts for all workloads are in \Cref{app:task-prompts}\cameraready{ in our technical report~\cite{shankar2026task}}.

\subsubsection{Baselines}
\rtwo{For all experiments in \textbf{Q1--Q4}, we compare task cascades against three reference approaches.}
\method{Oracle Only} runs the oracle model (GPT-4o) on every document, giving an ideal but costly upper bound. 
Our second baseline, \method{2-Model Cascade}, uses the model cascade approach from~\cite{patel2024lotus}, pairing GPT-4o-mini as proxy with GPT-4o as oracle. 
This approach is state-of-the art as it leverages token-level
log probabilities from LLM APIs, unlike prior model
cascade papers that employ more heavy-weight and less
accurate approaches~\cite{chenfrugalgpt, yue2024large}. 
For each class, we set the proxy threshold to the smallest value on the development set such that the combined accuracy of proxy predictions above the threshold {\em and} oracle predictions below it meets the target accuracy, aggressively minimizing cost. 
Third, \method{2-Model Cascade (+ Guarantees)} augments \method{2-Model Cascade} with our statistical guarantee procedure (\Cref{alg:threshold_adjustment}). 
Although alternatives such as SUPG~\cite{kang2020approximate} could enforce guarantees, we apply the same procedure we do to ensure a controlled comparison.

\rtwo{For \textbf{Q2}, which evaluates cost–accuracy tradeoffs across varying accuracy targets, we additionally include \method{LOTUS (+ Guarantees)}~\cite{patel2024lotus}. LOTUS embeds model cascades inside filter operators but provides guarantees on \emph{precision} and \emph{recall}, rather than overall accuracy. Because LOTUS requires separate precision and recall targets instead of a single accuracy target, we cannot directly compare it to our approach at fixed accuracy levels (as in \textbf{Q1} and \textbf{Q3}). For \textbf{Q2}, we implement the LOTUS baseline using their \ttt{sem\_filter} operator and evaluate it across all combinations of precision and recall targets in $\{0.60, 0.80, 0.90, 0.95\} \times \{0.60, 0.80, 0.90, 0.95\}$ on each binary classification workload.}

\subsubsection{\rone{Variants} Compared} Like our baselines, all methods use a single proxy model, along with the oracle model. We evaluate \method{Task Cascades}, our main approach, as well as \method{Task Cascades (+ Guarantees)}. \all{We also evaluate \method{Task Cascades (Lite)}, a lightweight variant that minimizes optimization cost by using the same parameters as \method{Task Cascades} but evaluating all candidate surrogate tasks only with the proxy model, never the oracle, substantially reducing $C_{\text{eval}}$ (as described in \Cref{sec:cost-model}).}
 Note that we distinguish \method{Task Cascades}, the approach, from task cascades, the outcome of such an approach; throughout this section, \method{Task Cascades} refers to our approach and is treated as a singular entity (e.g., ``\method{Task Cascades} achieves lower cost''), while a ``task cascade'' refers to an individual cascade configuration discovered by the approach. We also include \rone{different variants of task cascades} to isolate the effect of each component of our approach:

\begin{enumerate}[nosep,leftmargin=*]
\item \textbf{Surrogate Operation Discovery \rone{Variants}.} \method{No Surrogates} disables surrogate discovery, so cascades include only the original user-specified operation at different document fractions. \method{Single-Iteration} generates all surrogate operations in a single batch, without iterative refinement.
\item \textbf{Document Pruning \rone{Variants}.} \method{No Filtering} disables learned document pruning, so all tasks must process full documents. \method{Naive RAG Filter} replaces learned pruning with a simple retrieval filter that ranks chunks by cosine similarity to the embedding of the user-defined operation.
\item \textbf{Alternative Cascade Designs.} Rather than using the default greedy strategy, \method{Selectivity Ordering} constructs the cascade by prioritizing operations with the highest (selectivity~$-1)$/cost ratio, as in prior work on predicate ordering~\cite{hellerstein1993predicate}, where selectivity is the number of documents not classified by the task and thus passed down the cascade. \rtwo{\method{Restructure (Top-25\%)} applies our learned document restructuring to reorder each document, then allows the proxy to process only the top 25\% of the restructured content, while the oracle operates on the full document as usual. This approach isolates the effect of learned restructuring and pruning with a 2-task cascade, without introducing surrogate discovery or having more than 2 tasks.}
 Finally, \method{RAG + NoSur} applies only naive retrieval-based pruning with no surrogate discovery. This variant allows us to quantify the benefit of simple retrieval and isolate the added value of task cascades.
\end{enumerate}

\subsubsection{Metrics and Implementation Details} We report average inference cost in USD (\$) and the fraction of workloads where each method meets the accuracy target $\alpha$. Most experiments use a fixed $\alpha = 0.9$; in one experiment, we vary $\alpha$ from 0.75 to 0.95. For methods with guarantees, we set failure probability $\delta = 0.25$. 

We use OpenAI models: GPT-4o as the oracle model, GPT-4o-mini as the proxy, o1-mini for surrogate generation, and \ttt{text-embedding-\\3-small} for embeddings. We use PyTorch for the relevance classifier in \Cref{sec:restructuring}. All LLM calls use temperature 0. \rone{We consider four document fractions $\mathcal{F} = \{0.1, 0.25, 0.5, 1.0\}$ and run surrogate discovery for three iterations ($n_a = 3$) with three surrogate operations per iteration ($n_s = 5$). These parameter values were held constant across all workloads. While more iterations or surrogates may yield cheaper cascades at higher offline cost, we find these defaults work well across diverse tasks, and run more experiments in \Cref{sec:param-sensitivity}\cameraready{ in our technical report~\cite{shankar2026task}} to demonstrate that task cascades are robust to parameter choices.}

Inference costs are computed using OpenAI API pricing: \$2.50/1M input tokens and \$10.00/1M output tokens for GPT-4o, \$0.15/1M input tokens and \$0.60/1M output tokens for GPT-4o-mini, with a 50\% discount for prefix-cached completions. \rthree{Embedding costs using (\$0.02/1M tokens) are also included in all reported inference costs; these represent about 0.13$\times$ the cost of GPT-4o-mini inference and are negligible compared to overall LLM inference costs.}

All methods were implemented in Python and run on a 2024 MacBook Air (M4 chip) with OpenAI API calls, costing over \$3{,}000 USD.

\subsection{Results}

We now present experimental results addressing the questions outlined above. \Cref{tab:results-90-revised} presents the main cost and accuracy results. \all{We reflect on optimization costs in \Cref{sec:qualitative-insights} and leave a detailed discussion of these costs and latencies to \Cref{sec:optimization-cost-analysis}\cameraready{ in our technical report~\cite{shankar2026task}}.}

\begin{table*}
\centering
\footnotesize
\techreport{\vspace{-10pt}}
\caption{Inference results at a 90\% accuracy target \rthree{on a sample of 1k documents per workload (for datasets with more than 1k documents)}. Each cell shows average accuracy and cost. For all main methods (oracle, 2-model cascades, and task cascades), results are averaged over three trials; variants are single-trial. Baseline costs are shown in USD. Other costs are reported as a multiple of the corresponding 2‑Model Cascade variant (plain or +G). The ``Avg. Cost'' column averages over workloads where the method met the 90\% target. Bolded entries mark the lowest cost among methods on each workload.}
\vspace{-10pt}
\label{tab:results-90-revised}
\resizebox{\textwidth}{!}{%
\begin{tabular}{@{}lccccccccc@{}}
\toprule
\textbf{Method} & \textsc{agnews} & \textsc{court} & \textsc{enron} & \textsc{fever} & \textsc{games} & \textsc{legal} & \textsc{pubmed} & \textsc{wiki\_talk} & \textbf{Avg.\ Cost} \\
\midrule
\methodsmall{Oracle Only}
& \$0.45 & \$11.05 & \$6.34 & \$13.01 & \$3.13 & \$10.20 & \$8.36 & \$2.94 & \$6.94 \\ 
\midrule
\methodsmall{2-Model Cascade}
& 94.1\% | \$0.03
& 89.4\% | \$2.81
& 89.1\% | \$0.61
& 94.4\% | \$0.80
& 89.8\% | \$0.78
& 91.5\% | \$3.23
& 92.5\% | \$0.56
& 91.4\% | \$0.18
& \$1.13 \\
\methodsmall{2-Model Cascades (+G)}
& 94.7\% | \$0.04
& 93.0\% | \$4.06
& 96.2\% | \$1.56
& 96.0\% | \$2.22
& 95.2\% | \$1.29
& 94.8\% | \$4.12
& 93.6\% | \$0.83
& 95.6\% | \$0.49
& \$1.83 \\ 
\midrule
\methodsmall{Task Cascades}
& 95.3\% | 0.66×
& 88.0\% | 0.72×
& 95.5\% | 0.11×
& 90.6\% | 0.54×
& 92.0\% | 1.09×
& 91.3\% | 0.27×
& 91.6\% | 0.84×
& 91.5\% | 0.64×
& 0.59× \\ 
\methodsmall{Task Cascades (+G)}
& 95.4\% | 0.71×
& 90.8\% | {\bf 0.53×}
& 95.6\% | {\bf 0.09×}
& 92.4\% | 0.44×
& 89.2\% | {\bf 0.49×}
& 91.2\% | 0.26×
& 93.5\% | 1.23×
& 92.3\% | {\bf 0.37×}
& {\bf 0.52×} \\
\all{\methodsmall{Task Cascades (Lite)}}
& \all{94.1\% | 0.78×}
& \all{88.6\% | 0.76×}
& \all{95.5\% | 0.14×}
& \all{90.6\% | 0.63×}
& \all{90.2\% | 0.70×}
& \all{90.2\% | 0.41×}
& \all{91.4\% | 0.86×}
& \all{89.9\% | 0.68×}
& \all{0.62×} \\
\midrule
\methodsmall{No Surrogates}
& 95.1\% | 1.43×
& 94.5\% | 1.25×
& 98.0\% | 0.49×
& 92.5\% | 0.86×
& 90.5\% | 1.16×
& 90.0\% | 0.95×
& 91.9\% | 0.79×
& 96.0\% | 2.77×
& 1.21× \\
\methodsmall{Single‑Iteration}
& 92.4\% | {\bf 0.56×}
& 93.8\% | 1.21×
& 95.9\% | 0.15×
& 90.0\% | {\bf 0.30×}
& 89.2\% | 1.21×
& 92.8\% | 1.04×
& 93.4\% | {\bf 0.76×}
& 90.6\% | 0.61×
& 0.66× \\
\methodsmall{No Filtering}
& 93.9\% | 0.74×
& 90.0\% | 1.10×
& 98.5\% | 0.96×
& 95.5\% | 1.82×
& 98.4\% | 1.84×
& 96.1\% | 0.93×
& 95.1\% | 1.47×
& 98.4\% | 3.51×
& 1.55× \\
\methodsmall{Naive RAG Filter}
& 95.1\% | 0.95×
& 90.9\% | 0.94×
& 97.2\% | 0.30×
& 90.9\% | 0.42×
& 85.1\% | 0.61×
& 88.9\% | {\bf 0.12×}
& 90.5\% | 0.88×
& 89.9\% | 0.44×
& 0.65× \\
\methodsmall{Selectivity Ordering}
& 94.2\% | 1.20×
& 90.8\% | 2.33×
& 98.8\% | 5.50×
& 93.1\% | 5.08×
& 90.1\% | 2.32×
& 92.8\% | 1.86×
& 93.8\% | 8.44×
& 96.9\% | 8.81×
& 4.44× \\
\rtwo{\methodsmall{Restructure (Top-25\%)}}
& \rtwo{96.5\% | 2.67×}
& \rtwo{93.9\% | 1.22×}
& \rtwo{99.4\% | 0.85×}
& \rtwo{90.5\% | 0.37×}
& \rtwo{97.1\% | 1.70×}
& \rtwo{95.3\% | 1.78×}
& \rtwo{92.6\% | 2.38×}
& \rtwo{98.1\% | 3.49×}
& \rtwo{1.81×} \\
\methodsmall{RAG + NoSur}
& 96.0\% | 1.83×
& 92.1\% | 1.10×
& 98.2\% | 1.73×
& 91.2\% | 0.54×
& 87.9\% | 1.09×
& 90.0\% | 0.77×
& 90.5\% | 0.94×
& 93.4\% | 1.21×
& 1.16× \\
\bottomrule
\end{tabular}
}
\end{table*}

\begin{figure*}
\centering
\techreport{\vspace{-10pt}}
\includegraphics[width=0.9\linewidth]{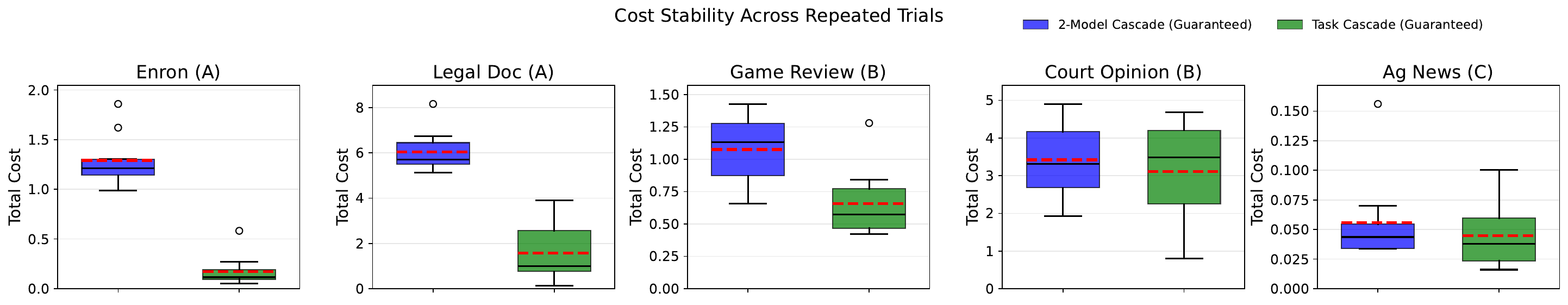}
\vspace{-5pt}
\caption{
\rone{Cost stability of \method{2-Model Cascade (+ Guarantees)} and \method{Task Cascades (+ Guarantees)} across 10 independent runs for five representative workloads.} Each box shows the distribution of total inference cost; red dashed lines indicate the median. \rone{Both methods exhibit high variance.}
}
\label{fig:repeated_trials_cost_stability}
\vspace{-5pt}
\end{figure*}

\begin{figure*}
\centering
\vspace{-10pt}
\includegraphics[width=\linewidth]{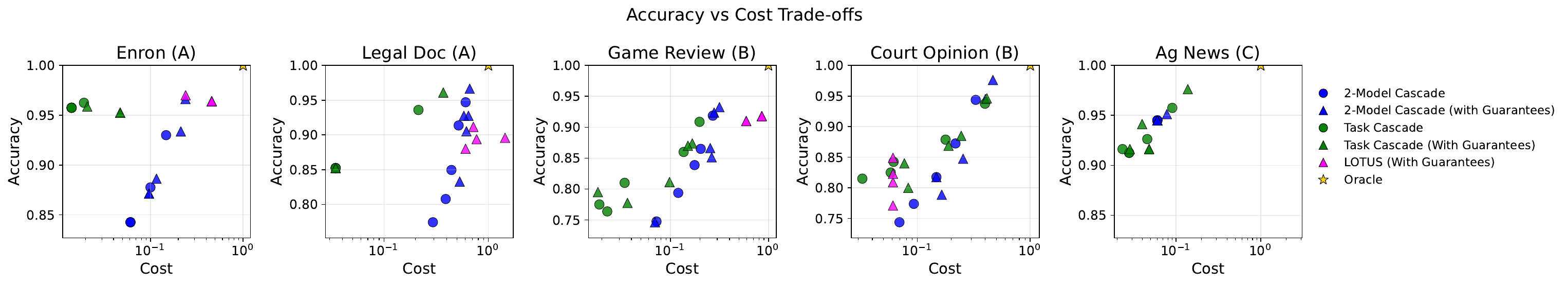}
\vspace{-15pt}
\caption{
Accuracy vs. cost trade-offs as the target accuracy varies 75\% to 95\% (in increments of 5\%). Task cascades (green) dominate the Pareto frontier on easy workloads and provide robust gains or new operating points on harder tasks. On simple workloads like \textsc{agnews}, cost improvements appear mainly at lower targets, where the baselines cannot match. \rtwo{\method{LOTUS (+ Guarantees)} (pink) only supports binary classification workloads, so we exclude \textsc{agnews}.}
}
\vspace{-5pt}
\label{fig:accuracy_vs_cost_tradeoff}
\end{figure*}

\subsubsection{Q1: Cost Savings and Component Importances} 
\label{subsubsec:cost-effectiveness}

\method{Task Cascades} reduces inference cost by 41\% compared to \method{2-Model Cascade}, and 48.5\% when considering accuracy guarantees. \all{\method{Task Cascades (Lite)}, our lightweight variant with reduced optimization cost, achieves nearly identical savings (0.62$\times$ vs. 0.59$\times$ on average) while requiring 75\% less optimization investment.}
  Savings are most substantial on workloads with significant irrelevant content (e.g., \textsc{enron} and \textsc{legal}); inference costs drop by up to 6$\times$ relative to \method{2-Model Cascade}.

\topic{Meeting the Accuracy Target} \method{Task Cascades} and \method{Task Cascades (+ Guarantees)} each miss the 90\% accuracy target on one workload by $<1$\%, while \method{2-Model Cascade} fails on 3 of 8 workloads. When accuracy guarantees are required (+ Guarantees variants), both approaches construct cascades using only half the available data, reserving the remainder for threshold adjustment (\Cref{alg:threshold_adjustment}), which increases costs by 1.43$\times$ on average for \method{Task Cascades}.

\topic{Component Contributions} All three core components---greedy task ordering, surrogate operation discovery, document pruning---are essential for robust performance. \method{Selectivity Ordering} performs worst (7.5$\times$ worse than \method{Task Cascades}). \method{Single-Iteration} (1.13$\times$ worse than  \method{Task Cascades}) demonstrates that iterative refinement contributes to cost reduction beyond single-round surrogate generation. 
Surprisingly, both \method{No Surrogates} (1.21$\times$) and \method{No Filtering} (1.55$\times$) are {\em more} expensive than the 2‑Model baseline. Without surrogates, the proxy is limited to the user-defined operation and provides savings on only half the workloads; on remaining tasks, the entire document must be processed, negating cost benefits. Without document pruning, each surrogate must operate on the entire document, making each surrogate operation as expensive as the first stage of \method{2-Model Cascade}. Similarly, \method{Restructure (Top-25\%)} fails to improve cost on 6 of 8 workloads: when the proxy receives too little context to perform the original operation, it cannot confidently resolve documents, escalating more to the expensive oracle.
\rtwo{Similarly, \method{Restructure (Top-25\%)} fails to improve cost on 6 of 8 workloads: when the proxy receives too little context to perform the original operation, it cannot confidently resolve documents, escalating more to the expensive oracle.} 
 \rone{Across workloads, simpler variants can fail in significant ways: \method{No Filtering} is up to 8$\times$ worse than \method{Task Cascades}, \method{No Surrogates} up to 4$\times$ worse, and the maximum degradation across all variants averages 12$\times$ worse than \method{Task Cascades}. By combining multiple techniques, \method{Task Cascades} achieves robust cost savings across diverse workloads.}

\topic{Comparison with Retrieval-Based Pruning} Our learned document pruning outperforms simple retrieval alternatives. \method{Naive RAG Filter} achieves the second-lowest cost on average, but \method{Task Cascades} reduces cost by an additional 11\%. 
\rthree{More notably, \method{RAG + NoSur} (retrieval-based pruning without surrogate discovery) costs 1.16$\times$ more than \method{2-Model Cascade} on average: naive document reordering can make restructured documents harder for the proxy to process, escalating more documents to the expensive oracle. Adding surrogate discovery (\method{Naive RAG Filter}) addresses this degradation, making it cheaper than \method{2-Model Cascade} on all workloads.}

\topic{Accuracy-Cost Tradeoffs at Fixed Target} 
\rtwo{At the fixed 90\% accuracy target, task cascades explore a broader configuration space, enabling more precise cost-accuracy tradeoffs. In some cases, task cascades identify plans that are both cheaper {\em and} more accurate than baselines: on \textsc{enron}, \method{Task Cascades} achieves 95.5\% accuracy at 0.11$\times$ cost vs. \method{2-Model Cascade}'s 89.1\% accuracy at 1.0$\times$ cost. In other cases, task cascades find cheaper plans closer to the target: on \textsc{fever}, \method{Task Cascades} achieves 90.6\% accuracy at 0.54$\times$ cost vs. \method{2-Model Cascade}'s 94.4\% accuracy at 1.0$\times$ cost. Overall, \method{Task Cascades} can search over more configurations and, many times, land closer to the target accuracy.}

\subsubsection{Q2: Accuracy--Cost Tradeoffs Across Targets}
\label{subsubsec:acrosstargets}
To evaluate performance across different accuracy requirements, we vary the target accuracy from 75\% to 95\% in 5\% increments, using the same workload groups as in \Cref{subsubsec:stability}. \Cref{fig:accuracy_vs_cost_tradeoff} shows the resulting accuracy--cost tradeoff. \rtwo{We include \method{LOTUS (+ Guarantees)} on binary classification workloads (all except \textsc{agnews}), using the setup described in \Cref{sec:setup}.}
\rtwo{Overall, \method{Task Cascades} reduces inference costs and extends the Pareto frontier across both easy and hard tasks, compared to all baselines. \method{LOTUS (+ Guarantees)} performs comparably to \method{2-Model Cascade (+ Guarantees)} on all workloads.}
 On easier Group A tasks (\textsc{enron}, \textsc{legal}), \method{Task Cascades} consistently dominates the Pareto frontier, matching or exceeding model cascade accuracy at much lower cost for any target. For more challenging Group B workloads (\textsc{games}, \textsc{court}), \method{Task Cascades} provides the greatest cost savings at lower target accuracies (75\%---85\%), because the agent can choose from a larger set of surrogate operations that meet these relaxed accuracy targets. As the target accuracy increases, fewer surrogates will satisfy the stricter requirements, so cost advantages diminish---but \method{Task Cascades} still often matches or outperforms the baseline, perhaps due to document pruning. On Group C (\textsc{agnews}), where the proxy already achieves high accuracy, there is little room for improvement: \method{Task Cascades} performs similarly to the baseline at high targets but can identify lower-cost solutions at lower accuracy targets.

\subsubsection{Q3: Consistency and Variance Across Repeated Trials}
\label{subsubsec:stability}

We next evaluate the consistency of task cascade outcomes across repeated runs, given the inherent stochasticity of LLM-based surrogate discovery. 
To provide a representative analysis, while managing compute and cost constraints, we select five workloads spanning a range of document lengths and task complexities. 
We organize these into three groups based on observed outcomes in \Cref{tab:results-90-revised}: Group A (\textsc{enron} and \textsc{legal}), where task cascades deliver the largest cost reductions; Group B (\textsc{games} and \textsc{court}), where it is most difficult to find cheap and accurate cascades; and Group C (\textsc{agnews}), where the proxy model alone is already highly accurate and documents are short, impeding savings from document pruning techniques.

We next evaluate the consistency of task cascade outcomes across repeated runs, given the inherent stochasticity of LLM-based surrogate discovery. 
To provide a representative analysis while managing compute and cost constraints, we select five workloads spanning a range of document lengths and task complexities. 
\rone{We group them by observed difficulty in \Cref{tab:results-90-revised}:
Group A (\textsc{enron}, \textsc{legal}) where task cascades achieve the largest cost reductions;
Group B (\textsc{games}, \textsc{court}) where it is hardest to find cheap, accurate cascades;
and Group C (\textsc{agnews}) where documents are short and the proxy model is already highly accurate. 
As shown in \Cref{fig:repeated_trials_cost_stability}, we evaluate both \method{2-Model Cascade (+ Guarantees)} and \method{Task Cascades (+ Guarantees)} across ten independent runs per workload. On Group A workloads, \method{Task Cascades (+ Guarantees)} reduces average cost by 86.5\% on \textsc{enron} and 74.1\% on \textsc{legal}, and even the {\em most expensive} task cascade run remains {\em cheaper than the cheapest} 2-model cascade.
On Group B workloads, both methods show wider variance due to increased task difficulty, but the {\em 75th percentile} task cascade cost is still {\em lower than the 25th percentile} of the baseline cost. 
On Group C (\textsc{agnews}), where the baseline is already near-optimal, \method{Task Cascades (+ Guarantees)} match its cost and variance.}

\rone{Both methods exhibit notable run-to-run variance. 
\method{2-Model Cascade (+ Guarantees)} also shows high variability---likely because the small development set size and data-splitting procedure for threshold adjustment introduce uncertainty in cascade construction.
\method{Task Cascades (+ Guarantees)} consistently achieves lower mean and median cost across all workloads, demonstrating that variability does not undermine its advantage.}

\subsubsection{Q4: Optimization Costs}
\label{subsubsec:optimization-costs}
\all{So far, we have focused on inference costs, as these dominate in production-scale deployments, where offline optimization costs are offset by online gains. To examine when this investment ``pays off'', we report optimization costs, break-even points (when total cost becomes lower than running \method{Oracle Only}), and estimated costs for processing 1 million documents (selected to showcase the impact at scale). Full details are provided in \Cref{sec:optimization-cost-analysis}, described in \Cref{tab:optimization-and-breakeven}\cameraready{ in our technical report~\cite{shankar2026task}}.}

\all{At 1M documents, optimization costs become negligible and task cascades provide substantial savings. \method{Task Cascades (Lite)}---our lightweight variant, described in \Cref{sec:setup}---provides cost savings over \method{2-Model Cascade} on all workloads, with average savings of 37.5\% and up to 86\% on \textsc{enron}. \method{Task Cascades} provides savings on 7 of 8 workloads (average 36\%, up to 87\% on \textsc{enron}); on \textsc{games}, where \method{Task Cascades} does not provide savings, \method{2-Model Cascade} fails to meet the 90\% accuracy target (\Cref{tab:results-90-revised}).}

\all{These savings are achieved even though \method{Task Cascades} incurs higher optimization costs than \method{2-Model Cascade} (7.9$\times$--27.0$\times$ across workloads, mean 11.4$\times$) and takes longer to build ($\sim14\times$ optimization latency). Our lightweight \method{Task Cascades (Lite)} variant costs 2.2$\times$--4.7$\times$ the \method{2-Model Cascade} baseline (mean 2.9$\times$) while preserving 95\% of inference savings. Optimization cost is recovered after processing 2{,}986 documents on average for \method{Task Cascades}, while \method{Task Cascades (Lite)} requires only 678---modest thresholds relative to our workload sizes (median 129k documents, as in \Cref{tab:dataset-stats}). Inference latencies are within the same order of magnitude (\Cref{tab:latency} in \Cref{sec:optimization-cost-analysis}\cameraready{ in our technical report~\cite{shankar2026task}}).}

\all{In summary, while task cascades require higher upfront optimization investment, this cost quickly becomes negligible at realistic scales, demonstrating that task cascades are practical and cost-effective for production deployments.}

\subsection{Insights from Analyzing Task Cascades}
\label{sec:qualitative-insights}

\topic{Choosing Variants of the Task Cascade Approach} \rone{The \method{Task Cascades} approach provides the most consistent cost reductions across workloads, making it a reliable default. Simpler variants may be preferable in specific scenarios: \method{Naive RAG Filter} when engineering effort is limited and documents have clearly identifiable relevant sections; \method{No Surrogates} when the original operation is not very difficult and the proxy can accurately perform it; and variants without filtering when documents are very short (e.g., \textsc{agnews}). The \method{Single-Iteration} variant is particularly promising, achieving strong performance with reduced offline cost and likely to improve as LLM agents improve in their ability to reason and handle complex tasks.  While individual variants occasionally outperform the full approach, \method{Task Cascades} is 20\% cheaper than the next best variant on average, making it the recommended starting point.}

\topic{Parameter Selection Guidance} \rone{Task cascades depend on three hyperparameters: the number of surrogate operations per iteration ($n_s$), refinement iterations ($n_a$), and document fraction sets ($\mathcal{F}$). We conduct sensitivity analysis on two representative workloads, varying $n_s \in \{3, 5, 10\}$, $n_a \in \{1, 2, 3\}$, and testing different $\mathcal{F}$ configurations (detailed results in \Cref{sec:param-sensitivity}\cameraready{ in our technical report~\cite{shankar2026task}}). Task cascades consistently outperform baselines across all configurations, demonstrating robustness to parameter choices. Based on these experiments, we recommend: $n_s \geq 3$, $n_a \geq 1$, and document fractions spanning from small (0.1 or 0.25) to full (1.0). We find that $n_s$ has a stronger effect on cost reduction than $n_a$, so practitioners should prioritize exploring diverse surrogates over extensive refinement.}

\section{Related Work}
\label{sec:related}

We cover three areas: LLM-powered data processing, cost-efficient LLM execution, and query rewriting with LLMs.

\topic{LLM-Powered Data Processing} Modern data systems increasingly integrate LLMs as first-class operators for natural language extraction and transformation. Industrial systems like Databricks, DuckDB, Snowflake, and Google's AlloyDB support LLM-powered {\em filter} and {\em map} functions in SQL~\cite{databricks-llm, duckdb-llm, alloydb-llm, snowflake-llm}. LOTUS~\cite{patel2024lotus}, ThalamusDB~\cite{jo2024thalamusdb}, and Aryn~\cite{anderson2024design} provide pandas-based, SQL, and Spark-like interfaces respectively. Palimpzest~\cite{liu2024declarative}, DocETL~\cite{shankar2024docetl}, and AOP~\cite{wang2025aop} offer custom DSLs for LLM-powered data processing. However, these systems typically invoke LLMs on each document independently, resulting in high inference costs. 
Some systems focus on supporting LLM-powered data processing for particular query or data modalities~\cite{caesura, lin2025querying, arora2023language, lu2025vectraflow}. 

\topic{Cost Optimization for LLM Inference} Strategies for reducing LLM inference costs include cost-based optimizers, specialized models, caching, ensembling, and model cascades. ABACUS~\cite{russo2025abacus} introduces a Cascades-style optimizer~\cite{Graefe1995TheCF} over different ``physical'' implementations of common LLM-powered operators, while ELEET~\cite{urban2024eleet} replaces LLM operators with trainable models. Other work reorders LLM calls for KV-cache reuse~\cite{liu2025optimizing}. Some methods profile multiple LLMs and aggregate their outputs to reduce cost. For example, ThriftLLM~\cite{huang2025thriftllm} and LLMBlender~\cite{jiang2023llm}, instead of using only the output of the last processed stage as in model cascades, ensemble predictions from all evaluated models. However, in the primary setting we study---where cascade models differ substantially in cost and quality---there is no additional value to be gained by ensembling the oracle's output (when available) with that of the proxy. SpareLLM~\cite{jo2024smart} profiles a number of LLMs for a given family of tasks, but ultimately selects a single model to handle the remaining tasks, which in our case boils down to a cascade with a single step, either a proxy model or an oracle, for all items, independent of confidence.
Our work builds on model cascades~\cite{viola2001rapid, kang2017noscope, lu2018accelerating, kang13blazeit, anderson2019physical, cao2022figo, kang2020approximate, chenfrugalgpt, patel2024lotus, yue2024large}, generalizing them by varying not only the model but also the operation and document fraction at each stage.

\topic{Query Rewriting with LLMs} LLM-powered query rewriting has proven effective in retrieval, question-answering~\cite{ma2023query, peng2024large, balaka2025pneuma, chen2025can}, and traditional query optimization~\cite{dharwada2025query, song2025quite, zhou2023learned, li2024llm}. DocETL~\cite{shankar2024docetl} decomposes complex tasks into logically equivalent subtasks for accuracy. In contrast, we rewrite tasks for cheaper, possibly-incorrect operations, dramatically expanding the rewrite space. Task cascades may not resemble logical decompositions---the same operation can appear multiple times at different document fractions.

Our work also builds on approximate query processing (AQP)~\cite{garofalakis2001approximate,chaudhuri2017approximate}. While tailored for unstructured data and LLMs, our components mirror classical AQP: surrogate operations resemble approximate predicates~\cite{ilprints351}, document fractions leverage sampling, and we use concentration bounds for accuracy guarantees~\cite{chaudhuri1998random, kang2020approximate}.

\section{Conclusion}

We introduce {\em task cascades}, a generalization of the model cascades framework for efficient LLM-powered unstructured text processing. Task cascades vary the model, operation, and document fraction at each stage to minimize inference cost while meeting accuracy targets. 
Across eight real-world workloads, task cascades reduce inference cost by 48.5\% on average compared to model cascades (with comparable guarantees) and by 86.2\% relative to oracle-only inference, with ablations confirming the importance of all components of our approach. Overall, task cascades provide a practical and extensible solution for scalable unstructured data analysis.

\begin{acks}
We acknowledge support from grants DGE-2243822, IIS-2129008, IIS-1940759, and IIS-1940757 awarded by the National Science Foundation, funds from the State of California, an NDSEG fellowship, funds from the Alfred P. Sloan Foundation, as well as EPIC lab sponsors: Adobe, Google, G-Research, Microsoft, PromptQL, Sigma Computing, Bridgewater, and Snowflake. Compute creditswere provided by Azure, Modal, NSF (via NAIRR), and OpenAI.
\end{acks}

\bibliographystyle{ACM-Reference-Format}
\bibliography{sample}

\clearpage
\FloatBarrier
\appendix
\section{Cascade Assembly and Statistical Guarantees for Cascade Accuracy}
\label{app:guarantees}

\all{\subsection{Greedy Cascade Assembly Algorithm}
\label{app:greedy-cascade-assembly}}

\all{In this section, we provide the pseudocode for the greedy cascade assembly algorithm referenced in \Cref{ssec:greedy-assembly}. This algorithm takes as input a set of eligible tasks (filtered and assigned thresholds via \Cref{alg:find-thresholds}) and greedily constructs an ordered cascade by iteratively selecting the task that maximally reduces inference cost while maintaining per-task accuracy requirements.}

\begin{algorithm}[h]
\footnotesize
\SetAlgoLined
\KwIn{Eligible tasks $\mathcal{T}_{\mathrm{eligible}}$ with thresholds, development set $D_{\mathrm{dev}}$, accuracy target $\alpha$}
\KwOut{Ordered cascade $\pi$}
\BlankLine
$\pi \gets \emptyset$ \;
\While{true}{
    $\texttt{best\_task} \gets$ None;\quad $\texttt{best\_cost} \gets$ cost of $\pi$\;
    \ForEach{unused task $T$ in $\mathcal{T}_{\mathrm{eligible}}$}{
        $\pi' \gets \pi + [T]$ \tcp*{Append $T$ to end of cascade}
        Execute $\pi'$ on $D_{\mathrm{dev}}$\;
        \If{for every task in $\pi'$, accuracy on the documents it classifies (i.e., with confidence above threshold) $\geq \alpha$ and cost$(\pi') < \texttt{best\_cost}$}{
    $\texttt{best\_task} \gets T$;\quad $\texttt{best\_cost} \gets$ cost$(\pi')$\;
}
    }
    \If{$\texttt{best\_task}$ is None}{
        \textbf{break}
    }
    $\pi \gets \pi + [\texttt{best\_task}]$ \tcp*{Update current cascade}
}
\Return $\pi$\;
\caption{Greedy Cascade Assembly (detailed)}
\label{alg:greedy-cascade}
\end{algorithm}

\subsection{Estimator Details and Proof of \Cref{thm:guarantees}}
\label{app:guarantees-proof}

\textbf{Estimation function}. To meet the target accuracy, the function $\mathcal{E}$ needs to correctly estimate, based on the sampled set $D_V$, whether a threshold set meets the accuracy target or not. More formally, let $\pi_i$ be the model cascade when using the threshold set $\mathbf{t}_i$, and let $\pi(x_j)$ for $(x_j, y_i)\in D_V$ be the output of the model cascade on input sample $x$ whose ground-truth answer is $y$. Define $X_{j}^i=\mathbb{I}[\pi_i(x_j)=y_j]$.  Note that $\mathbb{E}(X_{j}^i=1)=\mathrm{Acc}_D(\pi_i)$, so that the goal of $\mathcal{E}$ is to estimate, using a set of observed i.i.d Bernoulli random variables $X^i=\{X_{j}^i; \forall j\}$ if their mean is more than the target $T$ or not. This can be achieved by an application of common concentration bounds such as Hoeffding's inequality. We use the recent results by \citet{waudby2024estimating} that provides tighter bounds that Hoeffding's inequality as shown in \cite{waudby2024estimating}. The results of \cite{waudby2024estimating} takes both means and \textit{variances} of the observations into account to estimate their mean, thus yielding tighter bounds. The following is a restatement of the \Cref{thm:guarantees} by \cite{waudby2024estimating}, defining our function $\mathcal{E}$ and showing that it can estimate whether the mean of observation is more than $T$ or not with high probability.

\begin{lemma}[Corollary to \Cref{thm:guarantees} by \cite{waudby2024estimating}]\label{lemma:estimation}
    Consider the $i$-th cascade threshold set $\mathbf{t}_i$ and the corresponding Bernoulli random variables $X^i$ with $|X^i|=k$. If $\mathrm{Acc}_D(\mathbf{t}_i)<T$ and for a confidence parameter $\alpha\in[0, 1]$,
    \begin{align}\label{eq:est_prism_bound}
        \mathbb{P}(\mathcal{E}(\mathbf{t}, D_V)=True)\leq \alpha, \quad \text{where}
    \end{align}
    \begin{align}\label{eq:est_prism_def}
        \mathcal{E}(\mathbf{t}_i, D_V)=\mathbb{I}[\exists j\in[k]\,\text{s.t.}\,\mathcal{K}(T, X^i[:j])\geq\frac{1}{\alpha}].
    \end{align}

    $\mathcal{K}(T, X)$ for a set of $i$ random variables $X=\{X_1, ..., X_i\}$ is defined as 
    \begin{align}\label{eq:k_in_precision}
        \mathcal{K}(T, X)=\Pi_{j=1}^{i}(1+\min(\lambda_j, \frac{3}{4T})\times(X_j-T)), 
    \end{align}
    \begin{align*}
        \hspace{-13pt}\lambda_i=\sqrt{\frac{2\log(2/\delta)}{i\log(i+1)\hat{\sigma}^2_{i-1}}},\; \hat{\sigma}_i^2=\frac{1/4+\sum_{j=1}^i(X_j-\hat{\mu}_j)^2}{i+1}, \;\hat{\mu}_i = \frac{1/2+\sum_{j=1}^iX_j}{i+1}.
    \end{align*}
\end{lemma}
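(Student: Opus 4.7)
The plan is to recognize $\mathcal{K}(T, X^i[:j])$ as a nonnegative test supermartingale under the null hypothesis $\mathbb{E}[X_j^i]\le T$ and then invoke Ville's inequality to convert the crossing event into a tail bound of the desired form. Since the result is stated as a corollary to \cite{waudby2024estimating}, I plan to specialize their more general betting-style construction to our Bernoulli setting rather than rederive the machinery from scratch.

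Concretely, I would set up the filtration $\mathcal{F}_{j}=\sigma(X_1^i,\dots,X_j^i)$ and define $M_j=\mathcal{K}(T, X^i[:j])$ with $M_0=1$. First I would verify that the betting fraction $\lambda_j$ is $\mathcal{F}_{j-1}$-predictable: it is a deterministic function of $\hat{\sigma}_{j-1}^2$, which is itself a function of $X_1^i,\dots,X_{j-1}^i$ alone. Next I would check nonnegativity of each factor $1+\min(\lambda_j,\tfrac{3}{4T})(X_j^i-T)$. The worst case occurs when $X_j^i=0$, giving a factor of $1-\min(\lambda_j,\tfrac{3}{4T})\,T\ge 1-\tfrac{3}{4}>0$, so $M_j\ge 0$ almost surely. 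Then I would establish the supermartingale property under the null: by predictability of $\lambda_j$ and the assumption $\mathrm{Acc}_D(\mathbf{t}_i)=\mathbb{E}[X_j^i\mid \mathcal{F}_{j-1}]\le T$,
\begin{equation*}
\mathbb{E}[M_j\mid \mathcal{F}_{j-1}] = M_{j-1}\bigl(1+\min(\lambda_j,\tfrac{3}{4T})\,(\mathbb{E}[X_j^i\mid\mathcal{F}_{j-1}]-T)\bigr) \le M_{j-1},
\end{equation*}
because the multiplicative correction is at most $1$ (the bracket is nonpositive and the truncated $\lambda_j$ is nonnegative).

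With $(M_j)$ a nonnegative supermartingale starting at $M_0=1$, Ville's maximal inequality gives $\Pr\!\bigl(\sup_{j\le k} M_j \ge 1/\alpha\bigr)\le \alpha\cdot\mathbb{E}[M_0]=\alpha$. Unrolling the definition of $\mathcal{E}$ in \eqref{eq:est_prism_def} shows that $\{\mathcal{E}(\mathbf{t}_i,D_V)=\text{True}\}$ is exactly the event $\{\exists j\in[k]:M_j\ge 1/\alpha\}$, which is contained in the supremum event bounded by Ville. This yields \eqref{eq:est_prism_bound}.

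The main obstacle I anticipate is the careful verification of the two structural requirements---predictability of $\lambda_j$ and uniform nonnegativity of the factors---because $\lambda_j$ is defined through the adaptive variance estimator $\hat{\sigma}_{j-1}^2$ and is only kept nonnegative and bounded by the explicit cap $\tfrac{3}{4T}$. Once these are pinned down, the rest is a direct application of the test-martingale framework of \cite{waudby2024estimating}; the specialization to Bernoulli $X_j^i$ with null mean $T$ removes the need for their more general Hoeffding-style truncation arguments, but one must still be careful that the definition of $\hat{\sigma}_{j-1}^2$ with the $1/4$ initialization prevents division-by-zero in $\lambda_j$ at $j=1$, ensuring $M_1$ is well-defined.
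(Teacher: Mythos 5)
Your proposal is correct and follows the standard test-supermartingale argument of \cite{waudby2024estimating}, which is exactly what the paper relies on here: the paper itself does not re-derive this lemma but restates it as a corollary of the betting-based confidence-sequence result, so your reconstruction---predictability of the truncated $\lambda_j$, per-factor nonnegativity via the cap $\tfrac{3}{4T}$ (giving each factor at least $1/4$), the supermartingale step under the null $\mathbb{E}[X_j^i]\le T$, and Ville's inequality applied at level $1/\alpha$---is precisely the omitted proof. The one thing worth flagging is a notational wrinkle already present in the paper: the lemma's confidence parameter is written as $\alpha$, while the tuning rate $\lambda_j$ is written in terms of $\delta$; for the Ville bound to deliver exactly $\alpha$ these must coincide (or more precisely, any predictable nonnegative $\lambda_j$ suffices for validity, and the particular $\delta$-dependent form only tunes power), so your proof is unaffected by this.
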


\textbf{Proof of \Cref{thm:guarantees}}.  Lemma~\ref{lemma:estimation} shows that if the threshold set at the $i$-th iteration does not meet the target, then the probability that the estimation function returns True is bounded by $\alpha$. We use this result to show the total probability that \Cref{alg:threshold_adjustment} returns a threshold that doesn't meet the target is also bounded by $\alpha$. To do so, let $\mathbf{t}_{i^*}$ be the threshold set with the lowest $i^*$ that does not meet the target. Note that  \Cref{alg:threshold_adjustment} returns a threshold up to the first threshold that it estimates does not meet the target. Thus, if $\mathcal{E}(\mathbf{t}_{i^*}, D_V)=False$, then \Cref{alg:threshold_adjustment} returns a threshold set that meets the target accuracy (or returns not found).  As such, \Cref{alg:threshold_adjustment} returns a threshold that fails to meet the target only if $\mathcal{E}(\mathbf{t}_{i^*}, D_V)=True$. The probability of the latter is bounded by $\alpha$ according to \Cref{lemma:estimation}, so the probability of threshold selected by \Cref{alg:threshold_adjustment} not meeting the target is also bounded by $\alpha$.\qed

\subsection{Threshold Adjustment Procedure}
\label{app:guarantees-adjustment}

\begin{algorithm}
\footnotesize
\SetAlgoLined
\KwIn{Candidate threshold lists $\mathbf{p}_T^c$ for each task and class; validation set $D_V$; estimator $\mathcal{E}$ (constructed to certify accuracy $\alpha$ at failure probability $\delta$)}
\KwOut{Adjusted thresholds $\mathbf{t}^*$ such that $\Pr[\mathrm{Acc}(\pi^*) < \alpha] \leq \delta$}
\BlankLine
\texttt{best\_t} $\gets$ thresholds with maximal shift (most conservative)\;
\For{$s = s_{\max}$ \textbf{down to} $0$}{
\ForEach{class $c$ for each task $T$}{
Initialize $\mathbf{t} \gets \varnothing$\;
\If{$s <$ length of $\mathbf{p}_T^c$}{
    Set $\tau^c = p_s$ \tcp*{$\mathbf{p}_T^c = (p_0, p_1, \ldots, p_{s_{\max}})$}
}

\Else{
Set $\tau^c = \infty$ \tcp*{Disable class if shift exceeds available values}
}
Add $\tau^c$ to $\mathbf{t}$ for $(T, c)$\;
}
\If{$\mathcal{E}(\mathbf{t}, D_V)$ is \texttt{True}}{
        \texttt{best\_t} $\gets \mathbf{t}$\;
    }
\Else{
\textbf{break}
}
}
\Return \texttt{best\_t};
\caption{ThresholdShift}
\label{alg:global_threshold_shift_app}
\end{algorithm}

The final step of cascade construction (see \Cref{ssec:guarantees}) adjusts class thresholds to ensure the cascade achieves the target accuracy $\alpha$ on unseen data with probability at least $1 - \delta$. For each class $c$ in each task, we construct a shift list of candidate thresholds $p_1 < p_2 < \cdots < p_{k}$, representing increasingly conservative cutoffs above the original threshold $\tau^c$. For each shift value $s$ from $k$ down to $0$, we set the threshold for class $c$ to $p_s$ for $s > 0$, and to $\tau^c$ for $s = 0$; if no such threshold exists for $s$, we set $\tau^c = \infty$, disabling predictions of that class at that task. At each candidate threshold set $\mathbf{t}_s$, we evaluate the cascade on the validation set $D_V$ and apply the estimator $\mathcal{E}$ to certify whether the accuracy guarantee is satisfied. We return the least conservative (smallest $s$) threshold set that meets the guarantee. If no threshold set passes, we revert to the oracle-only cascade (this was never triggered in our experiments).

The full algorithm is presented in \Cref{alg:global_threshold_shift_app}.

\begin{algorithm}
\footnotesize
\SetAlgoLined
\KwIn{Original operation $o_{\mathrm{orig}}$, development set $D_{\mathrm{dev}}$, models $\mathcal{M}$, document fractions $\mathcal{F}$, accuracy target $\alpha$, failure probability $\delta$, rounds $n_a$, surrogates per round $n_s$}
\KwOut{Final cascade $\pi^*$}
\BlankLine

\tcp{Initialize candidate tasks}
$\mathcal{T} \gets \{(m, o_{\mathrm{orig}}, f) : m \in \mathcal{M}, f \in \mathcal{F}\}$

\For{$r = 1$ {\bf to} $n_a$}{
    \tcp{Assemble best cascade using CascadeAssembly (\Cref{alg:greedy-cascade})}
    $\pi_{\text{current}} \gets \textsc{CascadeAssembly}(\mathcal{T}, D_{\mathrm{dev}}, \alpha, \delta)$

    \tcp{Collect feedback and failure cases}
    \ForEach{task $T$ in $\mathcal{T}$}{
        Determine if $T$ is selected in $\pi_{\text{current}}$; record its coverage (number of documents classified) and up to 10 hard misclassified examples near threshold (confidence just above $\tau$).\\
    }
    Identify documents not classified by any non-oracle task (i.e., routed to the oracle); extract minimal supporting spans using the oracle for prompt brevity.
    
    \tcp{Elicit new surrogates from agent}
    Provide agent with: (i) user operation, (ii) oracle-extracted failure cases, (iii) per-task statistics, (iv) explicit surrogate generation instructions.\\
    Receive $n_s$ new surrogate operations.

    \tcp{Expand candidate set}
    \ForEach{new surrogate $o_{\text{sur}}$}{
        Add $(m, o_{\text{sur}}, f)$ to $\mathcal{T}$ for all $m \in \mathcal{M}, f \in \mathcal{F}$
    }
}
\Return{$\textsc{CascadeAssembly}(\mathcal{T}, D_{\mathrm{dev}}, \alpha, \delta)$}
\caption{Agentic Loop for Surrogate Cascade Refinement}
\label{alg:agentic_loop_surrogate}
\end{algorithm}

\section{Surrogate Operation Generation: Agentic Loop}
\label{app:surrogate-operation-gen}

\Cref{alg:agentic_loop_surrogate} describes the procedure for surrogate operation generation. At each iteration, we provide the LLM agent with a prompt structured as in \Cref{fig:prompt-template-surrogate agent}.

\begin{figure*}[h]
\centering
\begin{lstlisting}
Your job is to propose [n_s] simple surrogate operations for the classification task below. Each surrogate must target a different detection type from the following list:
- Entity Detection (checks for presence of a specific entity)
- Event Detection (detects a particular event or outcome)
- Relationship Detection (identifies a connection or association)
- Context Detection (determines the broader setting)
- Attribute Detection (checks for a property or attribute)
- Any other type not mentioned above

Each surrogate should be much simpler than the original task, and you must use a unique detection type for each. For a classification task, each surrogate's outputs must be a subset of the original task's outputs (if multiclass, may also output -1 for ``none of the above''; if binary, output must be True or False).

TASK:
[insert user-defined operation here]

FAILURE CASES:
Here are examples of documents the current cascade fails to classify with any non-oracle task. Only the most relevant text span from each document is shown (as extracted by the oracle model):

[insert minimal relevant spans]

TASK STATISTICS:
For each candidate task:
- Selected: [yes/no]
- Coverage: [number of documents classified]
- Hard examples: Up to 10 misclassified documents for this task, showing only the minimal relevant span from each document (as extracted by the oracle model).

INSTRUCTIONS:
Propose [n_s] surrogate operations, each corresponding to a different detection type above and distinct from surrogates previously generated. Surrogates should be iteratively refined based on the task statistics and failure patterns above.

For each surrogate, provide:

PROMPT: <a concise classification instruction, with allowable outputs matching the original task>
RATIONALE: <what it detects, which detection type it uses, and why it is simpler or complementary to previous surrogates>

EXAMPLE  (for a binary task; determining if a medical article describes adverse drug reactions):

TASK:
Does this article describe a negative reaction to a drug? Output True if a negative drug reaction is present, False if not.

Example surrogate operations:

PROMPT: Does the article describe a patient outcome such as rash, nausea, or toxicity? If yes, output True. Otherwise, output False.
RATIONALE: Event Detection: identifies negative outcomes, without requiring explicit attribution to a drug.

PROMPT: Is the article a case report? If yes, output True. Otherwise, output False.
RATIONALE: Context Detection: case reports often highlight adverse reactions in individual patients.

YOUR ANSWER HERE:

\end{lstlisting}
\caption{Prompt template used to elicit surrogates from the agent.}
\label{fig:prompt-template-surrogate agent}
\end{figure*}

The prompt in \Cref{fig:prompt-template-surrogate agent} is used at every iteration of the agentic loop, updated each time with new failure cases and task statistics. To help the agent generate useful surrogate operations, we provide a short list of concrete detection strategies, such as entity detection, event detection, relationship extraction, and context cues, which are easily translated into simple classification rules. We also include an in-context example showing how to turn these strategies into specific instructions. This example is only for demonstration and is not used in our actual experiments. The complete prompt template is available in our codebase.

While these detection strategies help guide the agent toward productive proposals, they are not strictly necessary---a good agent may discover effective surrogates even without this guidance. A systematic study of prompt engineering for surrogate generation, including the potential for agent fine-tuning, is left to future work.

\begin{figure}
    \centering
    \includegraphics[width=\linewidth]{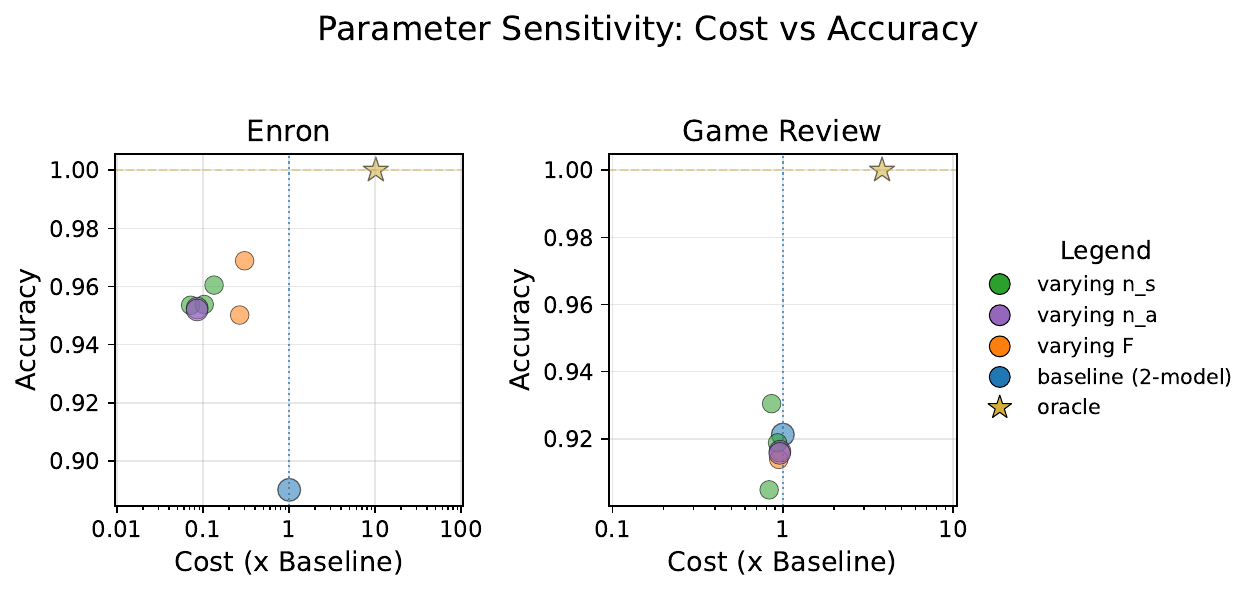}
    \caption{\rone{Parameter sensitivity analysis on two workloads representing different performance regimes. Each point shows cost (relative to \method{2-Model Cascade} baseline) and accuracy for different parameter settings. \method{Task cascades} demonstrates low variance across all parameter configurations, even on \textsc{games} where baseline performance is strong.}}
    \label{fig:param-sensitivity}
\end{figure}

\section{Examples of Surrogate Operations}
\label{app:examples-surr} 

\begin{figure}
    \centering
    \includegraphics[width=0.9\linewidth]{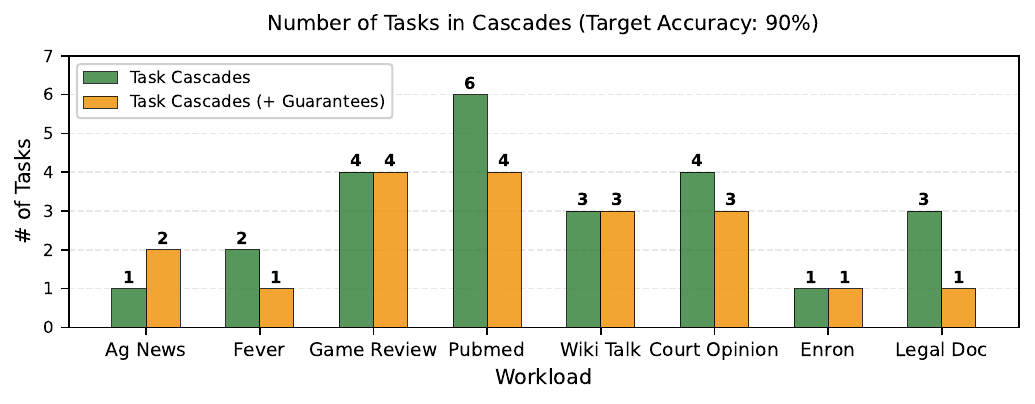}
    \vspace{-5pt}
    \caption{Number of tasks in each cascade for the \method{Task Cascades} and \method{Task Cascades (+ Guarantees)} methods at $\alpha=90\%$.}
    \label{fig:num-tasks-cascade}
    \vspace{-10pt}
\end{figure}

To understand how task cascades achieve cost savings, we examined the surrogate instructions produced by our agent and identified four main types. {\bf Keyword or phrase-based surrogates} rely on the presence of predictive terms or domain jargon; for example, in \textsc{wiki\_talk}, a surrogate checked for ``WP:3RR'' or ``three-revert rule,'' and in \textsc{games}, for comparative phrases like ``better than'' or ``prefer over.'' {\bf Class-specific surrogates} target features associated with a single class: for \textsc{agnews}, one surrogate detected references to corporate mergers or financial metrics to identify business articles; for \textsc{pubmed}, another flagged documents as ``research'' if they contained phrases like ``in vitro'' or ``in vivo.'' {\bf Semantic pattern-based surrogates} capture higher-level cues correlated with the target; for example, in \textsc{enron}, a surrogate looked for a formal signature with title and contact information to identify executive emails. Fourth, {\bf sequential decomposition surrogates} break the original operation into simpler steps; in \textsc{games}, one surrogate checked only whether any other game was mentioned---a necessary prerequisite for assessing comparative sentiment. In some cases, the agent found no helpful surrogate; for example, in \textsc{fever}, the best cascade simply reapplied the original user instruction to a subset of the document. The agent also sometimes reused the same surrogate at multiple document fractions, such as in \textsc{legal}, where it checked for ``agreement not to sue or challenge intellectual property rights'' at both 0.1 and 1.0 fractions.

\rone{\section{Parameter Sensitivity Analysis}
\label{sec:param-sensitivity}}

\rone{Task cascades depend on three main hyperparameters: (1) the number of surrogate operations proposed per iteration ($n_s$), (2) the number of agentic refinement iterations ($n_a$), and (3) the set of document fractions to consider ($\mathcal{F}$). To evaluate robustness to these choices, we conduct a sensitivity analysis on two representative workloads: \textsc{enron} (where task cascades achieve large gains) and \textsc{games} (where gains are modest). For each workload, we vary one parameter at a time while holding others at their default values: $n_s \in \{3, 5, 10\}$ with $n_a = 3$ fixed, $n_a \in \{1, 2\}$ with $n_s = 5$ fixed, and $\mathcal{F} \in \{\{0.25, 1.0\}, \{0.25, 0.5, 1.0\}\}$ with $n_s = 5, n_a = 1$. All experiments use a 90\% accuracy target and report costs relative to the \method{2-Model Cascade} baseline.}

\rone{\Cref{fig:param-sensitivity} shows the results. \method{Task Cascades}  consistently outperform baselines across all parameter configurations on both workloads. On \textsc{enron}, all settings achieve substantial cost savings over the \method{2-Model Cascade} baseline while meeting the accuracy target. On \textsc{games}, where \method{Task Cascades} shows only modest improvements over \method{2-Model Cascade}, parameter variations remain stable, demonstrating robustness even in challenging scenarios.}

\rone{The sensitivity analysis sheds light on several practical insights. First, $n_s$ (the number of surrogate operations to explore) has a more pronounced effect on cost than $n_a$ (the number of refinement iterations). This suggests it is worth generating multiple diverse surrogate candidates per iteration rather than relying heavily on iterative refinement. Second, all tested document fraction sets yield similar performance, indicating that the approach is robust to this choice. However, it is important to include the full document fraction (1.0) in $\mathcal{F}$, as some tasks or proxy models may benefit from complete context. We recommend spanning a range of fractions---from a small portion (e.g., 0.1 or 0.25) to the full document---to enable effective document pruning while maintaining flexibility. Overall, these findings suggest that modest parameter values ($n_s \geq 3$, $n_a \geq 1$, and $\mathcal{F}$ spanning from small to full document) are sufficient for strong performance, though users can increase $n_s$ when additional optimization budget is available.}

\begin{table*}
\centering
\vspace{-10pt}
  \begin{minipage}{0.99\textwidth}
  \centering
  \caption{\all{Offline optimization costs, break-even analysis, and estimated cost at 1M documents.
Left columns (Optimization Cost) report the US dollars required to build each method (\method{TC}, \method{TC (Lite)}, \method{2MC}), given $D_\mathrm{dev}=200$ (except 150 for \textsc{legal}). 
Middle columns (Break-Even) show how many inference documents must be processed before total cost becomes lower than running the \method{Oracle Only} on every document.
Rightmost columns estimate total end-to-end cost for processing 1 million documents: offline optimization + per-document inference. Parentheses show, for each \method{TC} variant, its ratio to the \method{2MC} baseline.}}
\vspace{-10pt}
  \label{tab:optimization-and-breakeven}
  \begin{tabular}{@{}lrrrrrrrrr@{}}
  \toprule
  & \multicolumn{3}{c}{\textbf{Optimization Cost (\$)}} 
  & \multicolumn{3}{c}{\textbf{Break-Even (Documents)}} 
  & \multicolumn{3}{c}{\textbf{Estimated Total Cost @ 1M Docs (\$)}} \\
  \cmidrule(lr){2-4} \cmidrule(lr){5-7} \cmidrule(lr){8-10}
  \textbf{Workload}
  & \textbf{\method{TC}}
  & \textbf{\method{TC (Lite)}}
  & \textbf{\method{2MC}}
  & \textbf{\method{TC}}
  & \textbf{\method{TC (Lite)}}
  & \textbf{\method{2MC}}
  & \textbf{\method{TC}}
  & \textbf{\method{TC (Lite)}}
  & \textbf{\method{2MC}} \\
  \midrule
  \textsc{agnews}     & \$3.32  & \$0.58 & \$0.12 & 6,173.87 (27.0$\times$) & 1,080.54 (4.7$\times$) & 228.57 & \$28.07 (0.75$\times$) & \$29.83 (0.79$\times$) & \$37.62 \\
  \textsc{court}      & \$29.51 & \$7.59 & \$2.70 & 2,615.32 (10.0$\times$)  & 680.85 (2.6$\times$)   & 262.14 & \$2{,}558.51 (0.73$\times$) & \$2{,}677.09 (0.76$\times$) & \$3{,}515.20 \\
  \textsc{enron}      & \$13.78 & \$3.38 & \$1.15 & 1,757.40 (10.9$\times$) & 431.84 (2.7$\times$)   & 160.56 & \$97.65 (0.13$\times$) & \$110.13 (0.14$\times$) & \$763.65 \\
  \textsc{fever}      & \$39.52 & \$10.27& \$3.69 & 2,513.60 (10.4$\times$) & 656.72 (2.7$\times$)   & 241.77 & \$579.52 (0.58$\times$) & \$640.27 (0.64$\times$) & \$1{,}003.69 \\
  \textsc{games}      & \$10.92 & \$2.61 & \$0.87 & 3,831.92 (12.9$\times$) & 808.24 (2.7$\times$)   & 296.17 & \$1{,}073.67 (1.10$\times$) & \$685.11 (0.70$\times$) & \$975.87 \\
  \textsc{legal}      & \$45.22 & \$11.89& \$4.30 & 1,745.22 (7.9$\times$)  & 482.16 (2.2$\times$)   & 222.09 & \$1{,}748.54 (0.28$\times$) & \$2{,}598.41 (0.41$\times$) & \$6{,}312.89 \\
  \textsc{pubmed}     & \$25.22 & \$6.44 & \$2.28 & 2,557.29 (10.9$\times$) & 653.78 (2.8$\times$)   & 233.85 & \$613.22 (0.87$\times$) & \$608.44 (0.87$\times$) & \$702.28 \\
  \textsc{wiki\_talk} & \$9.49  & \$2.23 & \$0.72 & 2,687.62 (12.9$\times$) & 632.54 (3.0$\times$)   & 208.70 & \$153.49 (0.68$\times$) & \$155.23 (0.69$\times$) & \$225.72 \\
  \bottomrule
  \end{tabular}
  \end{minipage}%
\end{table*}

\begin{table}
\centering
\begin{minipage}{0.99\linewidth}
\centering
\caption{\all{Latency measurements (mean~$\pm$~stdev, in seconds) for optimization and inference (inf.). 
Due to availability of credits, \method{2MC}, \method{TC (Lite)}, and \method{Oracle Only} used Azure OpenAI APIs, 
while \method{TC} used standard OpenAI endpoints (Azure is $\sim$2.5$\times$ slower~\cite{azure_latency_reddit}). 
Standard deviations are high for all methods, due to API tail latencies.
Rightmost column estimates total inference latency for processing 1M documents of API requests, in hours.
Estimations represent a lower bound, since they do not account for API tail latencies.}}
\label{tab:latency}
\footnotesize
\begin{tabular}{@{}lccc@{}}
\toprule
\textbf{Method} 
& \textbf{Optimization (s)} 
& \textbf{Inf. @ $\sim$1K docs (s)} 
& \textbf{Est.\ Inf. @ 1M docs (h)} \\
\midrule
\methodsmall{2MC} & 64.93 $\pm$ 57.60   & 280.64 $\pm$ 291.36   & $3.0$ $\pm$ $0.1$ \\
\methodsmall{TC}  & 911.45 $\pm$ 381.23 & 201.19 $\pm$ 149.03   & $2.2$ $\pm$ $0.0$ \\
\methodsmall{TC (Lite)} & 1253.0 $\pm$ 947.3 & 175.4 $\pm$ 90.6     & $1.9$ $\pm$ $0.0$ \\
\methodsmall{Oracle Only} & -- & 26.14 $\pm$ 56.80 & $0.3$ $\pm$ $0.0$ \\
\bottomrule
\end{tabular}
\end{minipage}%
\vspace{-5pt}
\end{table}

\all{\section{Optimization Cost and Latency Analysis}
\label{sec:optimization-cost-analysis}}

\all{In this section, we provide detailed measurements of the offline optimization costs and latencies for both \method{Task Cascades} and \method{2-Model Cascade}, as well as analysis showing when the upfront optimization investment is recovered through inference savings and total costs at scale.}

\all{\subsection{Optimization Cost and Break-Even Analysis}}

\all{\Cref{tab:optimization-and-breakeven} reports optimization costs, break-even points, and estimated end-to-end costs for processing 1 million documents for each workload. We define the break-even point as the number of inference documents required for the total cost of a method---its one-time optimization cost plus per-document inference---to become lower than simply running the oracle model on every document.}

\topic{Optimization Cost}
\all{As explained in \Cref{sec:cost-model}, \method{Task Cascades} incurs higher optimization costs because it performs two additional procedures: {\em (i)} training a lightweight relevance classifier for document restructuring, and {\em (ii)} evaluating multiple candidate rewrites of the pipeline on the development set during cascade assembly. In contrast, \method{2-Model Cascade} only evaluates the proxy and oracle models to set confidence thresholds. The magnitude of this cost depends on document length and task complexity: longer documents (\textsc{fever}, \textsc{legal}) increase oracle labeling costs, while harder tasks require more surrogate generation rounds.}

\all{We also include \method{Task Cascades (Lite)}, a lightweight variant that does not consider candidate tasks that leverage the oracle model, as described in \Cref{sec:setup}. This configuration reduces optimization cost by roughly 70–80\% (2.2$\times$--4.7$\times$ the \method{2-Model Cascade} baseline, mean 2.9$\times$) while maintaining similar inference behavior.}

\topic{Break-Even Analysis} \all{Across workloads, \method{Task Cascades} reaches break-even between 1{,}745 documents (\textsc{legal}) and 6{,}174 documents (\textsc{agnews}), with a mean of 2{,}986. \method{Task Cascades (Lite)} achieves proportionally lower break-even thresholds (480–1{,}080 documents, mean 678) due to its smaller optimization cost. The ratio between \method{Task Cascades} and \method{2-Model Cascade} break-even points ranges from 7.9$\times$ (\textsc{legal}) to 27.0$\times$ (\textsc{agnews}), with an overall mean of 11.4$\times$---reflecting the varying magnitude of inference savings.}

\topic{Costs at Scale}
\all{The rightmost columns of \Cref{tab:optimization-and-breakeven} estimate total end-to-end cost for processing 1 million documents (optimization + inference). At this scale, optimization costs become negligible. \method{Task Cascades (Lite)} provides cost savings over \method{2-Model Cascade} on all 8 workloads, with median savings of \$327 per million documents and up to \$3{,}714 on \textsc{legal}. \method{Task Cascades} provides savings on 7 of 8 workloads (all except \textsc{games}), with median savings of \$257 and up to \$4{,}564 on \textsc{legal}. On \textsc{games}, \method{2-Model Cascade} fails to meet the 90\% accuracy target (\Cref{tab:results-90-revised}), while both \method{Task Cascades} variants do.}

\topic{Practical Implications}
\all{Most workloads in our evaluation contain tens of thousands to millions of documents (\Cref{tab:dataset-stats}), making these optimization costs negligible at scale. For example, running on the full \textsc{enron} corpus ($\sim$500k documents) requires only 0.35\% of documents to reach break-even, while the smaller \textsc{court} dataset ($\sim$36k documents) reaches break-even after processing 7.3\% of its corpus. These results show that \method{Task Cascades}---and particularly its \method{Lite} variant---recovers its investment quickly in realistic production settings.}

\all{\subsection{Latency Measurements}}

\all{\Cref{tab:latency} reports optimization and inference latencies for all methods, as well as estimated total inference latency for processing 1 million documents under 32-way concurrency. Measurements were taken on a 2024 MacBook Air (M4 chip) using 32 parallel API threads, with exponential back-off sleeps to handle rate limits. Note that API latencies can vary significantly across providers and time of day. \method{Task Cascades} optimization takes $\sim 14 \times$ longer than \method{2-Model Cascade}, largely due to surrogate generation (o1-mini) and evaluating many candidate configurations. However, runtimes are not directly comparable across providers: as noted in \Cref{tab:latency}, \method{2-Model Cascade}, \method{Oracle Only}, and \method{Task Cascades (Lite)} were measured on Azure OpenAI while \method{Task Cascades} used standard OpenAI endpoints; Azure can be $\sim$2.5$\times$ slower~\cite{azure_latency_reddit}. This provider effect could explain why \method{Task Cascades (Lite)} shows a higher mean optimization runtime (1253s) despite its lower token cost.}

\all{For inference at scale at the 1M mark, our per-document measurements suggest that \method{Task Cascades} and \method{Task Cascades (Lite)} achieve within the same order of magnitude of latency (hours) to \method{2-Model Cascade}. However, these estimates should be interpreted cautiously: they do not account for API tail latencies or rate limiting, and the measurements were taken across different providers (Azure vs. standard OpenAI) with known latency differences. In batch processing scenarios---the primary use case for large-scale document processing---higher optimization latency is acceptable when it yields substantial cost savings, as demonstrated by the widespread adoption of batch APIs with extended completion times between 24 and 72 hours~\cite{openai_batch_api, google_vertex_batch_gemini}.}

\section{Workload Prompt Templates}
\label{app:task-prompts}

This section lists the exact prompt templates used for each classification workload in our experiments (\Cref{sec:evaluation}). Workloads were selected as described in \Cref{sec:evaluation}.

To ensure clear and consistent task specification, we used Claude 3.5 Sonnet to generate each instruction, explicitly prompting it to ``make this an unambiguous classification task for an LLM.'' Each template includes a \texttt{\{document\_text\}} placeholder, which was replaced by the input document at inference time.

The full prompt templates for all workloads are shown below.

\vspace{1.2em}
\noindent\textsc{AG News}
\vspace{0.5em}
\begin{lstlisting}[caption={\textsc{agnews} classification prompt},label={lst:agnews}]
I will give you a news article. Here is the article: {document_text}

Assign the article to one of the following categories and return only the corresponding number:
- 0 = World: The article primarily discusses international news, global events, diplomacy, conflicts, or issues involving multiple countries or regions.
- 1 = Sports: The article is mainly about sporting events, teams, athletes, competitions, results, or sports-related news.
- 2 = Business: The article focuses on economic matters, companies, markets, finance, industry trends, or business-related developments.
- 3 = Sci/Tech: The article covers topics in science, technology, research, discoveries, innovations, or advancements in scientific or technological fields.

You must respond with ONLY the number (0, 1, 2, or 3) that best matches the main topic of the article.
\end{lstlisting}

\vspace{1.2em}
\noindent\textbf{Enron}
\vspace{0.5em}
\begin{lstlisting}[caption={\textsc{enron} sender classification prompt},label={lst:enron}]
I will give you an email. Here is the email: {document_text}

Your task is to determine if this email was sent from a senior executive or other high-ranking person at Enron.

- Return True if the email was sent from a senior executive (CEO, President, VP, Director, etc.) or other high-ranking person.
- Return False if the email was sent from a lower-level employee or non-executive.

Important notes: Look for job titles, positions, and signatures that indicate seniority. Consider both formal titles and contextual clues about the sender's role and authority level.

You must respond with ONLY True or False.
\end{lstlisting}

\vspace{1.2em}
\noindent\textbf{Court Opinion}
\vspace{0.5em}
\begin{lstlisting}[caption={\textsc{court} opinion reversal prompt},label={lst:court}]
I will give you a Supreme Court opinion. Here is the opinion: {document_text}

Your task is to determine whether this court opinion reverses a lower court's ruling.

Carefully read the opinion and consider the following:

- Return True if the Supreme Court (or the relevant higher court) reverses the decision of a lower court.
- Return False if the Supreme Court upholds (affirms) the lower court's ruling, or if the opinion does not address a lower court's decision.

You must respond with ONLY True or False.
\end{lstlisting}

\vspace{1.2em}
\noindent\textbf{PubMed}
\vspace{0.5em}
\begin{lstlisting}[caption={\textsc{pubmed} study type prompt},label={lst:pubmed}]
I will give you a full biomedical research article from PubMed. Here is the article: {document_text}

Your task is to determine the type of biomedical study described in the full article.

Carefully read the article and determine which of the following study types best describes the research. Consider the study's methodology, data sources, and overall approach. Choose the single most appropriate type from the list below and return only the corresponding number:
- 0 = Randomized Controlled Trial (RCT): Participants are randomly assigned to groups to compare outcomes.
- 1 = Observational Study: Researchers observe existing groups without assigning interventions (includes cohort, case-control, cross-sectional).
- 2 = Meta-analysis or Systematic Review: Combines and analyzes results from multiple prior studies using systematic methods.
- 3 = Bench / Wet-lab Experimental Study: Laboratory-based experiments (e.g., cell culture, animal models, in vitro assays).
- 4 = Computational / Bioinformatics Study: Uses computational models, simulations, or large-scale data analysis (e.g., genomics, proteomics).
- 5 = Narrative Review (non-systematic): Describes a topic broadly without a structured or systematic review process.

You must respond with ONLY the number (0-5) that best matches the article's main study type.
\end{lstlisting}

\vspace{1.2em}
\noindent\textbf{Game Reviews}
\vspace{0.5em}
\begin{lstlisting}[caption={\textsc{game} review comparison prompt},label={lst:games}]
I will give you a review for a game. Here is the review: {document_text}

Your task is to carefully read the following review and decide whether it mentions any other games in a more positive way than the game being reviewed.

Consider whether the reviewer compares the current game to another game and expresses a preference for the other game, either directly or indirectly. Look for statements that praise another game or suggest that the other game is better in some respect.

- Return True if the review references another game and describes it more favorably than the game being reviewed.
- Return False if the review does not mention other games, or if it does not express a preference for another game over the current one.

You must respond with ONLY True or False.
\end{lstlisting}

\vspace{1.2em}
\noindent\textbf{Legal Docs}
\vspace{0.5em}
\begin{lstlisting}[caption={\textsc{legal} IP clause prompt},label={lst:legal}]
I will give you a legal document. Here is the document: {document_text}

Your task is to determine if this document contains any type of covenant not to sue or agreement not to challenge intellectual property rights. This includes both direct promises and indirect restrictions.

- True if it contains ANY of these:
  - Agreement not to contest/challenge IP validity or ownership
  - Promise not to question/attack/impugn IP rights
  - Agreement not to take actions inconsistent with IP ownership
  - Covenant not to bring claims/suits related to IP
  - More generally, any provision that could be interpreted as a restriction on future IP challenges
- False if it contains none of the above

You must respond with ONLY True or False.
\end{lstlisting}

\vspace{1.2em}
\noindent\textbf{Wiki Talk}
\vspace{0.5em}
\begin{lstlisting}[caption={\textsc{wiki\_talk} edit revert prompt},label={lst:wikitalk}]
I will give you a Wikipedia Talk page discussion. Here is the discussion: {document_text}

Your task is to carefully read the following discussion and determine the outcome regarding the edits in question.

Consider whether the discussion led to a reversion (a rollback of previous edits) or resulted in a stable change to the content.

- Return True if the discussion resulted in reverting or rolling back changes to a previous version.
- Return False if the discussion led to stable changes being kept, or if no changes were made as a result of the discussion.

Be sure to look for explicit mentions of reversion, rollback, or restoration of prior content, as well as consensus to keep new changes.

You must respond with ONLY True or False.
\end{lstlisting}

\vspace{1.2em}
\noindent\textbf{FEVER}
\vspace{0.5em}
\begin{lstlisting}[caption={\textsc{fever} claim verification prompt},label={lst:fever}]
I will give you a claim and a list of documents that may or may not explicitly support the claim. Here is the claim and documents: {document_text}

Your task is to assess whether the provided claim is supported by the accompanying documents.

- Return True if at least one of the documents clearly supports the claim.
- Return False if none of the documents support the claim, or if the evidence is unclear or insufficient to determine support.

You must respond with ONLY True or False.
\end{lstlisting}

\end{document}